\documentclass[preprint,12pt]{elsarticle}

\usepackage{amssymb}
\usepackage{amsmath}
\usepackage{amsthm}
\usepackage{booktabs}
\usepackage{bm}
\usepackage{color}

\newtheorem{theorem}{Theorem}
\newtheorem{corollary}{Corollary}

\journal{Physica D: Nonlinear Phenomena}

\begin{document}

\begin{frontmatter}

%% ------------------------------------------------------------------
%% TODO (author to fill in): title, author names, affiliation(s).
%% ------------------------------------------------------------------
\title{Jordan-Block Degeneracy and Cubic-Order Bifurcating
Periodic Orbits in Minimum-Energy Optimal Control of Hamiltonian
Equilibria}

\author[1]{Mai Bando\corref{cor1}}
\author[1,2]{Ayano Tsuruta}
\author[1]{Shanshan Pan}
\author[3]{Daniel J. Scheeres}

\cortext[cor1]{Corresponding author.}

\affiliation[1]{
    organization={Department of Aeronautics and Astronautics, Kyushu University},
    addressline={744 Motooka, Nishi-ku},
    city={Fukuoka},
    postcode={819-0395},
    state={Fukuoka},
    country={Japan}
}

\affiliation[2]{
    organization={Human Spaceflight Technology Directorate,
                  Japan Aerospace Exploration Agency (JAXA)},
    addressline={2-1-1 Sengen},
    city={Tsukuba},
    postcode={305-8505},
    state={Ibaraki},
    country={Japan}
}

\affiliation[3]{
    organization={Ann and H.J. Smead Department of Aerospace Engineering Sciences,
                  University of Colorado Boulder},
    city={Boulder},
    postcode={80309},
    state={Colorado},
    country={USA}
}

\begin{abstract}
Equilibria of the Hamiltonian system associated with Pontryagin's minimum principle exhibit an exact doubling of the natural spectrum and, under a simple pairing condition, a Jordan block at every simple purely imaginary eigenvalue.
Consequently, the classical Lyapunov Center Theorem does not apply to the augmented system, and no periodic orbit with nonzero optimal control bifurcates at linear order.
We establish this mechanism in general and show that an optimal-control-induced periodic family emerges at cubic order in a Lindstedt--Poincar\'e expansion.
The mechanism is illustrated in closed form for the pendulum and evaluated numerically for the planar $L_2$ equilibrium of Hill's restricted three-body problem, where the third-order approximation is validated against an independently computed family of periodic orbits.
\end{abstract}

\begin{highlights}
\item The Pontryagin augmentation associated with minimum-energy optimal control exactly doubles the natural spectrum of a Hamiltonian system at an equilibrium.

\item An explicit pairing condition determines whether a doubled eigenvalue forms a Jordan block.

\item The resulting Jordan block prevents the direct application of the classical Lyapunov Center Theorem.

\item An optimal-control-induced family of periodic solutions
bifurcates at cubic order, as demonstrated for the planar pendulum and
the Hill3BP $L_2$ equilibrium.
\end{highlights}

\begin{keyword}
Hamiltonian systems \sep optimal control \sep Jordan-block degeneracy
\sep Lyapunov Center Theorem \sep Lindstedt--Poincar\'e expansion
\sep bifurcation \sep restricted three-body problem
\end{keyword}

\end{frontmatter}

\section{Introduction}

Quadratic optimal control problems play a central role in modern control theory and trajectory optimization. Through Pontryagin's minimum principle, such problems are transformed into Hamiltonian boundary-value systems by introducing adjoint variables in addition to the physical state \cite{pontryagin1962,bryson1975,liberzon2012,agrachev2004}. 
This Hamiltonian formulation provides the theoretical foundation for
indirect optimal control methods \cite{bryson1975} and has been widely applied to
trajectory optimization in astrodynamics and aerospace engineering \cite{Conway2010,Kelly2017}.

For conservative mechanical systems, the uncontrolled dynamics is itself Hamiltonian. Applying Pontryagin's minimum principle therefore produces an augmented Hamiltonian system defined on the product of the state and adjoint spaces.
The augmented Hamiltonian system has also been investigated from a
dynamical viewpoint, particularly in connection with invariant manifolds
\cite{Sakamoto2008} and turnpike phenomena \cite{trelat2018steady}.
However, these studies mainly concern hyperbolic equilibria or periodic
solutions. Recent work has also examined center dynamics in augmented Hamiltonian
systems \cite{sakamoto2024}, but the non-semisimple center dynamics
induced by Pontryagin augmentation and the resulting bifurcation of
periodic extremals remain largely unexplored.

The local dynamics near Hamiltonian equilibria has been extensively studied in classical dynamical systems theory. In particular, the Lyapunov Center Theorem guarantees the existence of one-parameter families of periodic orbits bifurcating from a non-resonant equilibrium possessing a simple purely imaginary eigenvalue pair \cite{lyapunov1992,weinstein1973,moser1976,meyer2009}. 
This theorem is one of the fundamental results in the local theory of Hamiltonian systems and forms the basis for the local construction of Lyapunov and halo orbit families in celestial mechanics \cite{richardson1980,gomez2001}. 
Its applicability, however, relies on the semisimplicity of the center eigenvalue. Once a Jordan block is present, the classical Lyapunov-center bifurcation no longer applies, and the existence of periodic solutions must instead be determined through higher-order nonlinear analysis.

Non-semisimple eigenvalue configurations at Hamiltonian equilibria have
been studied extensively in the bifurcation literature. A classical
example is the Hamiltonian Hopf bifurcation, in which two pairs of
purely imaginary eigenvalues collide as a system parameter is varied.
At the critical parameter value, the repeated eigenvalues may become
non-semisimple and form Jordan blocks, in which case the local dynamics
is described by the non-semisimple normal-form theory developed by
van der Meer \cite{vanderMeer1985}.
Related normal-form studies have also considered non-semisimple purely imaginary resonances, including the non-semisimple 1:1 resonance \cite{vanGils1990}.

In these classical bifurcation settings, non-semisimplicity arises at
critical parameter values as system parameters are varied. In contrast,
the degeneracy considered in this study does not arise through parameter
variation; instead, it is generated structurally by the Pontryagin
augmentation associated with minimum-energy optimal control.
We show that the Hamiltonian system associated with Pontryagin's minimum
principle possesses the following structural property: every eigenvalue
of the linearization at an equilibrium of the underlying Hamiltonian
system appears with doubled algebraic multiplicity in the augmented
system. Furthermore, under a simple pairing condition, the doubled
purely imaginary eigenvalues considered here form Jordan blocks rather
than remaining semisimple. Consequently, the assumptions of the
classical Lyapunov Center Theorem are violated, and no nontrivial
periodic solution with a nonzero adjoint component bifurcates at linear
order.

Although the linearized system admits no periodic family with a nonzero
adjoint component, such a family may emerge through nonlinear effects.
We show that an optimal-control-induced periodic family is selected by a
bifurcation condition that first appears at cubic order in a
Lindstedt--Poincar\'e expansion.
The cubic-order bifurcation mechanism is first derived analytically for
the planar pendulum, for which all coefficients are obtained in closed
form, and is then examined numerically for the planar $L_2$
equilibrium of Hill's restricted three-body problem.
The resulting third-order approximation is validated against an
independently computed family of periodic orbits.

The remainder of the paper is organized as follows.
Section~\ref{sec:theory} establishes the general spectral structure of
the Pontryagin-augmented Hamiltonian system and derives the pairing
condition that determines the formation of Jordan blocks.
Section~\ref{sec:pendulum} presents the complete cubic-order analysis
for the pendulum.
Section~\ref{sec:hill3bp} applies the theory to the planar $L_2$
equilibrium of Hill's restricted three-body problem and compares the
analytical predictions with numerical periodic-orbit families.
Finally, Section~\ref{sec:discussion} discusses the implications of the
present results, and Section~\ref{sec:conclusion} concludes the paper.
\section{General theory}
\label{sec:theory}

\subsection{Problem setting}

Consider the control-affine system
\begin{equation}
    \dot{\bm{x}}
    =
    f(\bm{x})+\bm{B}\bm{u},
\end{equation}
where $\bm{x}\in\mathbb{R}^{m}$ denotes the state vector,
$\bm{u}\in\mathbb{R}^{r}$ the control input,
$f:\mathbb{R}^{m}\to\mathbb{R}^{m}$ a sufficiently smooth vector field,
and $\bm{B}\in\mathbb{R}^{m\times r}$ a constant input matrix.
Assume that the uncontrolled system possesses an equilibrium
$\bm{x}^*\in\mathbb{R}^{m}$ satisfying
\begin{equation}
f(\bm{x}^*)=\bm{0}.
\end{equation}
We consider the fixed-terminal-time minimum-energy optimal control problem with cost functional
\begin{equation}
J=\frac12\int_0^{t_f}|\bm{u}|^2\,dt.
\end{equation}
subject to
\begin{equation}
\dot{\bm{x}}=f(\bm{x})+\bm{B}\bm{u}.
\end{equation}
The subsequent analysis depends only on the associated Hamiltonian system and is independent of the boundary conditions of the optimal control problem.
Applying Pontryagin's minimum principle\cite{pontryagin1962,bryson1975}, the associated Hamiltonian is
\begin{equation}
H(\bm{x},\bm{p},\bm{u})=\bm{p}^{\mathrm T}
\left(f(\bm{x})+\bm{B}\bm{u}\right)+\frac12\bm{u}^{\mathrm T}\bm{u},
\end{equation}
where $\bm{p}\in\mathbb{R}^{m}$ denotes the adjoint variable.
The stationarity condition yields
\begin{equation}
\bm{u}^*= -\bm{B}^{\mathrm T}\bm{p}.
\end{equation}
Substituting this expression into the canonical equations gives the state--adjoint system
\begin{align}
\dot{\bm{x}}&=f(\bm{x})-\bm{B}\bm{B}^{\mathrm T}\bm{p},\\
\dot{\bm{p}}&=-Df(\bm{x})^{\mathrm T}\bm{p}.
\end{align}
Since $\bm{x}^*$ is an equilibrium of the uncontrolled system,
$f(\bm{x}^*)=\bm{0}$, the corresponding equilibrium of the
Pontryagin system is $(\bm{x}^*,\bm{0})$.
Linearizing about $(\bm{x}^*,\bm{0})$, with
$\bm A=Df(\bm{x}^*)$, gives
\begin{equation}
\begin{pmatrix}
\dot{\delta\bm{x}}\\
\dot{\delta\boldsymbol p}
\end{pmatrix}
=
\begin{pmatrix}
\boldsymbol A & -\boldsymbol B\boldsymbol B^{\mathrm T}\\
\boldsymbol0 & -\boldsymbol A^{\mathrm T}
\end{pmatrix}
\begin{pmatrix}
\delta\bm{x}\\
\delta\boldsymbol p
\end{pmatrix} =: \boldsymbol M\begin{pmatrix}
\delta\bm{x}\\
\delta\boldsymbol p
\end{pmatrix}.
\label{eq:M}
\end{equation}
The matrix $\bm M$ is block upper triangular for any smooth vector field
$f$ and any constant input matrix $\bm B$ under the minimum-energy
control formulation considered here. This structure follows from
linearization of the state--adjoint equations about the equilibrium
$(\bm{x}^*,\bm0)$ and forms the starting point of the subsequent
analysis. The additional assumption that the uncontrolled dynamics is
Hamiltonian is imposed in the next subsection, where the spectral
consequences of Eq.~\eqref{eq:M} are established.

\subsection{Structural spectral doubling}
\begin{theorem}[Structural spectral doubling]
Let
\[
\boldsymbol M=
\begin{pmatrix}
\boldsymbol A & -\boldsymbol B\boldsymbol B^{\mathrm T}\\
\boldsymbol0 & -\boldsymbol A^{\mathrm T}
\end{pmatrix},
\]
where $\boldsymbol A\in\mathbb R^{m\times m}$ and
$\boldsymbol B\in\mathbb R^{m\times r}$.
Then
\begin{equation}
\mathrm{Spec}(\boldsymbol M)
=
\mathrm{Spec}(\boldsymbol A)
\cup
\bigl(-\mathrm{Spec}(\boldsymbol A)\bigr).
\label{eq:doubling}
\end{equation}
If, in addition, the uncontrolled dynamics is Hamiltonian, then
$\boldsymbol A$ is a Hamiltonian matrix whose spectrum is symmetric
with respect to the origin. Consequently, $\boldsymbol M$ has the same
distinct eigenvalues as $\boldsymbol A$, with each eigenvalue having
exactly twice its algebraic multiplicity in $\boldsymbol M$.
\end{theorem}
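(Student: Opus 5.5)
The plan is to compute the characteristic polynomial of $\boldsymbol M$ directly from its block upper triangular structure, and then use the Hamiltonian symmetry of $\boldsymbol A$ to merge the two factors.

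\emph{Step 1: factor the characteristic polynomial.} Since $\lambda\boldsymbol I-\boldsymbol M$ is block upper triangular with diagonal blocks $\lambda\boldsymbol I-\boldsymbol A$ and $\lambda\boldsymbol I+\boldsymbol A^{\mathrm T}$, its determinant is the product of the determinants of these blocks:
\[
\det(\lambda\boldsymbol I_{2m}-\boldsymbol M)=\det(\lambda\boldsymbol I_m-\boldsymbol A)\,\det(\lambda\boldsymbol I_m+\boldsymbol A^{\mathrm T}).
\]
The off-diagonal block $-\boldsymbol B\boldsymbol B^{\mathrm T}$ plays no role here. Transposition does not change the determinant, so the second factor equals $\det(\lambda\boldsymbol I+\boldsymbol A)=(-1)^m\det(-\lambda\boldsymbol I-\boldsymbol A)=(-1)^m p_{\boldsymbol A}(-\lambda)$, where $p_{\boldsymbol A}$ is the characteristic polynomial of $\boldsymbol A$. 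Hence $p_{\boldsymbol M}(\lambda)=(-1)^m p_{\boldsymbol A}(\lambda)\,p_{\boldsymbol A}(-\lambda)$. Reading off the roots gives Eq.~\eqref{eq:doubling}, with multiplicities counted: the algebraic multiplicity of $\mu$ in $\boldsymbol M$ is $m_{\boldsymbol A}(\mu)+m_{\boldsymbol A}(-\mu)$.

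\emph{Step 2: use the Hamiltonian structure.} If the uncontrolled dynamics is Hamiltonian, then $f=\boldsymbol J\nabla h$ with constant symplectic $\boldsymbol J$. This gives $\boldsymbol A=\boldsymbol J\boldsymbol S$ with $\boldsymbol S=\nabla^2 h(\bm x^*)$ symmetric, so $\boldsymbol A^{\mathrm T}\boldsymbol J+\boldsymbol J\boldsymbol A=0$. It follows that $\boldsymbol A^{\mathrm T}=-\boldsymbol J\boldsymbol A\boldsymbol J^{-1}$, so $\boldsymbol A$ is similar to $-\boldsymbol A^{\mathrm T}$. In particular $p_{\boldsymbol A}(\lambda)=p_{-\boldsymbol A^{\mathrm T}}(\lambda)=p_{-\boldsymbol A}(\lambda)=(-1)^m p_{\boldsymbol A}(-\lambda)$, using that $m$ is even. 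Thus $m_{\boldsymbol A}(-\mu)=m_{\boldsymbol A}(\mu)$.

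\emph{Step 3: conclude.} Substituting Step 2 into Step 1 gives $p_{\boldsymbol M}(\lambda)=p_{\boldsymbol A}(\lambda)^2$, up to sign conventions. So $\mathrm{Spec}(\boldsymbol M)=\mathrm{Spec}(\boldsymbol A)$ as a set, and every eigenvalue has algebraic multiplicity $2m_{\boldsymbol A}(\mu)$.

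The main point needing care is Step 2. One must state precisely what ``Hamiltonian'' means for $\boldsymbol A$ (namely $\boldsymbol A=\boldsymbol J\boldsymbol S$ in canonical coordinates) and check the multiplicity statement, not just the set equality. The similarity argument handles both at once, since similar matrices have identical characteristic polynomials. The rest is routine determinant bookkeeping.
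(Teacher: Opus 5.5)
Your proof is correct and follows the same route as the paper. Block-triangularity gives $\mathrm{Spec}(\boldsymbol M)=\mathrm{Spec}(\boldsymbol A)\cup\mathrm{Spec}(-\boldsymbol A^{\mathrm T})$, and the Hamiltonian similarity $\boldsymbol J^{-1}\boldsymbol A^{\mathrm T}\boldsymbol J=-\boldsymbol A$ merges the two factors; working with characteristic polynomials is a useful refinement, because it makes the multiplicity claim precise (in fact $p_{\boldsymbol M}(\lambda)=p_{\boldsymbol A}(\lambda)^2$ holds exactly, with no sign ambiguity), whereas the paper's ``$\mathrm{Spec}(\boldsymbol A)\cup\mathrm{Spec}(\boldsymbol A)$'' must be read implicitly as a multiset.
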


\begin{proof}
Since $\boldsymbol M$ is block upper triangular, its spectrum is the union of the
spectra of the diagonal blocks,
\[
\mathrm{Spec}(\boldsymbol M) = \mathrm{Spec}(\boldsymbol A)
\cup \mathrm{Spec}(-\boldsymbol A^{\mathrm T}).
\]
Because transposition preserves eigenvalues, $\mathrm{Spec}(-\boldsymbol A^{\mathrm T}) = -\mathrm{Spec}(\boldsymbol A)$, which proves Eq.~\eqref{eq:doubling}.

If $\boldsymbol A$ is Hamiltonian, then $\boldsymbol A=\boldsymbol J\boldsymbol S$, where $\boldsymbol J$ is the canonical symplectic matrix and $\boldsymbol S$ is symmetric.
Using $\boldsymbol J^{-1}\boldsymbol A^{\mathrm T}\boldsymbol J = -\boldsymbol A$,
it follows that $\boldsymbol A^{\mathrm T}$ and $-\boldsymbol A$ are similar.
Since transposition preserves the spectrum, $\boldsymbol A$ and
$-\boldsymbol A$ therefore possess the same spectrum.
Hence
\[
\mathrm{Spec}(\boldsymbol M) = \mathrm{Spec}(\boldsymbol A) \cup \mathrm{Spec}(\boldsymbol A),
\]
which means that every eigenvalue of $\boldsymbol A$ appears in $\boldsymbol M$
with doubled algebraic multiplicity.
\end{proof}

\begin{corollary}[Absence of spectral doubling in dissipative systems]
\label{cor:dissipative}

Suppose that $\boldsymbol{x}^*$ is a hyperbolic asymptotically stable
equilibrium of the uncontrolled system, so that
\[
\operatorname{Re}(\lambda)<0, \qquad \forall\lambda\in\operatorname{Spec}(\boldsymbol A).
\]
Then
\[
\operatorname{Spec}(\boldsymbol A) \cap \bigl(-\operatorname{Spec}(\boldsymbol A)\bigr) = \emptyset.
\]
Therefore, no eigenvalue of $\boldsymbol A$ coincides with an eigenvalue
of $-\boldsymbol A^{\mathrm T}$, and the spectral doubling and associated
Jordan-block degeneracy considered in this paper do not occur.

\end{corollary}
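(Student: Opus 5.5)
The plan is a direct sign argument, followed by an appeal to the block-triangular structure already used in the proof of the spectral-doubling theorem.

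First, take any $\mu\in\operatorname{Spec}(\boldsymbol A)\cap\bigl(-\operatorname{Spec}(\boldsymbol A)\bigr)$. Then $\mu\in\operatorname{Spec}(\boldsymbol A)$, so $\operatorname{Re}(\mu)<0$ by the stability hypothesis. Also $\mu=-\lambda$ for some $\lambda\in\operatorname{Spec}(\boldsymbol A)$, so $\operatorname{Re}(\mu)=-\operatorname{Re}(\lambda)>0$. These two inequalities contradict each other, so the intersection is empty. This step only needs the strict inequality. It is worth noting in the write-up that hyperbolicity is exactly what rules out $\operatorname{Re}(\lambda)=0$, where an eigenvalue could coincide with its own negative.

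Second, I will translate this into the statement about $\boldsymbol M$. By the first part of the spectral-doubling theorem, $\operatorname{Spec}(-\boldsymbol A^{\mathrm T})=-\operatorname{Spec}(\boldsymbol A)$. Hence no eigenvalue of the upper diagonal block $\boldsymbol A$ equals an eigenvalue of the lower block $-\boldsymbol A^{\mathrm T}$.

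Consequently, the characteristic polynomial factors as $\det(\lambda I-\boldsymbol M)=\det(\lambda I-\boldsymbol A)\det(\lambda I+\boldsymbol A^{\mathrm T})$, and the two factors have no common root. Each eigenvalue of $\boldsymbol M$ therefore keeps the algebraic multiplicity it has in exactly one of the diagonal blocks, so no doubling occurs.

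For the claim that no Jordan-block degeneracy arises, I will use the fact that the Sylvester equation $\boldsymbol A\boldsymbol X-\boldsymbol X(-\boldsymbol A^{\mathrm T})=\boldsymbol B\boldsymbol B^{\mathrm T}$ has a unique solution $\boldsymbol X$ whenever the two spectra are disjoint. Conjugating $\boldsymbol M$ by the block matrix $\begin{pmatrix}\boldsymbol I&\boldsymbol X\\ \boldsymbol 0&\boldsymbol I\end{pmatrix}$, with the sign of $\boldsymbol X$ chosen to match, gives the block-diagonal matrix $\operatorname{diag}(\boldsymbol A,-\boldsymbol A^{\mathrm T})$. The off-diagonal coupling $-\boldsymbol B\boldsymbol B^{\mathrm T}$ is thus removable, so it cannot create any Jordan chain linking the state and adjoint eigenspaces. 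This is the coupling-induced degeneracy studied in the paper.

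The only subtle point is interpreting "Jordan-block degeneracy considered in this paper" correctly. It means the coupling-induced Jordan blocks between the doubled eigenvalues, not any Jordan structure that $\boldsymbol A$ itself may have. The Sylvester decoupling argument addresses exactly this.
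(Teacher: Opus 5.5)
Your argument is correct, and its first half is what the paper relies on. The corollary is given no separate proof; it is read off from $\mathrm{Spec}(\boldsymbol M)=\mathrm{Spec}(\boldsymbol A)\cup(-\mathrm{Spec}(\boldsymbol A))$ in the spectral-doubling theorem together with the sign argument you give.

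You go beyond the paper on the Jordan-block claim. The paper simply asserts that once the diagonal blocks share no eigenvalue, the degeneracy described by the Jordan-block criterion cannot arise. Implicitly this rests on the observation from that proof that $(\boldsymbol A-\lambda\boldsymbol I)\boldsymbol x=\boldsymbol B\boldsymbol B^{\mathrm T}\boldsymbol p$ is uniquely solvable whenever $\lambda\notin\mathrm{Spec}(\boldsymbol A)$, so every adjoint eigenvector lifts to an eigenvector of $\boldsymbol M$. That reasoning works one eigenvector at a time. Your similarity transformation by $\begin{pmatrix}\boldsymbol I&\boldsymbol X\\ \boldsymbol 0&\boldsymbol I\end{pmatrix}$ settles the whole Jordan structure at once: $\boldsymbol M\sim\boldsymbol A\oplus(-\boldsymbol A^{\mathrm T})$, including generalized eigenvectors and repeated eigenvalues of $\boldsymbol A$. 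It also makes precise your correct reading that only coupling-induced Jordan blocks are excluded, not any already present in $\boldsymbol A$.

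Your sign convention checks out. The conjugated off-diagonal block is $\boldsymbol A\boldsymbol X+\boldsymbol X\boldsymbol A^{\mathrm T}-\boldsymbol B\boldsymbol B^{\mathrm T}$, so you are solving the Lyapunov equation $\boldsymbol A\boldsymbol X+\boldsymbol X\boldsymbol A^{\mathrm T}=\boldsymbol B\boldsymbol B^{\mathrm T}$. For Hurwitz $\boldsymbol A$ its unique solution is $\boldsymbol X=-\int_0^\infty e^{\boldsymbol A t}\boldsymbol B\boldsymbol B^{\mathrm T}e^{\boldsymbol A^{\mathrm T}t}\,dt$, minus the controllability Gramian, which you could cite instead of the general Sylvester theorem.

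One small correction to your side remark. The condition $\operatorname{Re}(\lambda)<0$ already implies hyperbolicity, so that hypothesis does no extra work. Purely imaginary eigenvalues spoil the conclusion not because an eigenvalue equals its own negative (only $\lambda=0$ does). They spoil it because, for real $\boldsymbol A$, $i\omega\in\mathrm{Spec}(\boldsymbol A)$ forces $-i\omega=\overline{i\omega}\in\mathrm{Spec}(\boldsymbol A)$.
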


\subsection{Jordan-block criterion}\label{subsec:criterion}
The previous theorem shows that every eigenvalue of the natural Hamiltonian equilibrium is doubled in the augmented system. The remaining question is whether the doubled eigenvalue is semisimple or forms a Jordan block. 

\begin{theorem}[Jordan-block criterion]
Suppose that the uncontrolled dynamics is Hamiltonian, and let
$\lambda=i\omega$ be a simple eigenvalue of $\boldsymbol A$.
Let $\boldsymbol\ell$ and $\boldsymbol p_0$ be left eigenvectors of
$\boldsymbol A$ associated with $\lambda$ and $-\lambda$,
respectively, satisfying
\[
\boldsymbol A^{\mathrm T}\boldsymbol\ell
=
\lambda\boldsymbol\ell,
\qquad
\boldsymbol A^{\mathrm T}\boldsymbol p_0
=
-\lambda\boldsymbol p_0.
\]
Then the doubled eigenvalue $\lambda$ of $\boldsymbol M$ satisfies
\begin{align}
\lambda \text{ forms a }2\times2\text{ Jordan block}
&\Longleftrightarrow
\boldsymbol\ell^{\mathrm T}
\boldsymbol B\boldsymbol B^{\mathrm T}
\boldsymbol p_0
\neq0,
\\
\lambda \text{ is semisimple}
&\Longleftrightarrow
\boldsymbol\ell^{\mathrm T}
\boldsymbol B\boldsymbol B^{\mathrm T}
\boldsymbol p_0
=0.
\end{align}
\end{theorem}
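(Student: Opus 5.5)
By the spectral doubling theorem, $\lambda=i\omega$ has algebraic multiplicity exactly two in $\boldsymbol M$. One copy comes from the block $\boldsymbol A$, the other from $-\boldsymbol A^{\mathrm T}$. It is simple in each block, because $-\lambda$ is simple for $\boldsymbol A$ by Hamiltonian symmetry. The plan is therefore to compute the geometric multiplicity of $\lambda$ in $\boldsymbol M$, which is either one (a $2\times2$ Jordan block) or two (semisimple). Note first the convention: $\boldsymbol p_0$ satisfies $\boldsymbol A^{\mathrm T}\boldsymbol p_0=-\lambda\boldsymbol p_0$, so $(-\boldsymbol A^{\mathrm T})\boldsymbol p_0=\lambda\boldsymbol p_0$. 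Thus $\boldsymbol p_0$ spans the $\lambda$-eigenspace of the lower diagonal block.

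First, the eigenvectors of $\boldsymbol M$ for $\lambda$. Write $(\boldsymbol\xi,\boldsymbol\eta)$ for an eigenvector. The second block row gives $(-\boldsymbol A^{\mathrm T}-\lambda)\boldsymbol\eta=0$, so $\boldsymbol\eta=c\,\boldsymbol p_0$. The first block row gives
\[
(\boldsymbol A-\lambda \boldsymbol I)\boldsymbol\xi=c\,\boldsymbol B\boldsymbol B^{\mathrm T}\boldsymbol p_0 .
\]
The vector $(\boldsymbol v,\boldsymbol0)$, with $\boldsymbol v$ the right eigenvector of $\boldsymbol A$, is always an eigenvector. A second, independent one exists iff this equation is solvable with $c\neq0$, that is, iff $\boldsymbol B\boldsymbol B^{\mathrm T}\boldsymbol p_0\in\operatorname{Range}(\boldsymbol A-\lambda\boldsymbol I)$.

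Second, the Fredholm alternative. Since $\lambda$ is simple, $\operatorname{Range}(\boldsymbol A-\lambda\boldsymbol I)$ is the annihilator of the left null space $\ker(\boldsymbol A^{\mathrm T}-\lambda\boldsymbol I)=\operatorname{span}\{\boldsymbol\ell\}$. The bilinear pairing $\boldsymbol\ell^{\mathrm T}\boldsymbol y$ (no conjugation) is the right one, because we are working over $\mathbb C$ with $\boldsymbol A$ real. Hence solvability is equivalent to $\boldsymbol\ell^{\mathrm T}\boldsymbol B\boldsymbol B^{\mathrm T}\boldsymbol p_0=0$. So the geometric multiplicity is two exactly when the pairing vanishes, which gives the semisimple case.

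Third, the Jordan case. Suppose the pairing is nonzero. The geometric multiplicity is one and the algebraic multiplicity is two, so the structure is a single $2\times2$ block. To make this explicit, take $(\boldsymbol v,\boldsymbol 0)$ as the eigenvector; after rescaling, a generalized eigenvector is $(\boldsymbol\xi,\boldsymbol p_0)$. The defining equation is $(\boldsymbol A-\lambda\boldsymbol I)\boldsymbol\xi-\boldsymbol B\boldsymbol B^{\mathrm T}\boldsymbol p_0=\boldsymbol v$. Applying $\boldsymbol\ell^{\mathrm T}$ shows that $\boldsymbol\ell^{\mathrm T}\boldsymbol v$ must equal $-\boldsymbol\ell^{\mathrm T}\boldsymbol B\boldsymbol B^{\mathrm T}\boldsymbol p_0\neq0$. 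This is consistent, since $\boldsymbol\ell^{\mathrm T}\boldsymbol v\neq0$ for a simple eigenvalue.

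The only delicate point is the bookkeeping in the Fredholm step: keeping transposes and signs straight, and justifying that simplicity gives a one-dimensional cokernel spanned by $\boldsymbol\ell$ with $\boldsymbol\ell^{\mathrm T}\boldsymbol v\neq0$. The statement's use of $-\lambda$ for $\boldsymbol p_0$ must also be matched to the lower block's eigenvalue $\lambda$. Once that is settled, the two equivalences follow together, since the two cases are complementary.
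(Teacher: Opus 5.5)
Your proposal is correct and follows the paper's proof essentially step for step. The lower block row forces the adjoint component to be a multiple of $\boldsymbol p_0$, and the upper block row reduces the existence of a second eigenvector to solvability of $(\boldsymbol A-\lambda\boldsymbol I)\boldsymbol\xi=\boldsymbol B\boldsymbol B^{\mathrm T}\boldsymbol p_0$. The one-dimensional left null space $\operatorname{span}\{\boldsymbol\ell\}$ then gives the pairing criterion, and comparing algebraic multiplicity two with geometric multiplicity one or two settles both equivalences.

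Your explicit generalized eigenvector in the third step is a harmless extra, since the multiplicity count already forces a single $2\times2$ block. The bilinear pairing $\boldsymbol\ell^{\mathrm T}\boldsymbol y$ is the right one, but not because $\boldsymbol A$ is real. For any complex matrix $\boldsymbol N$, a dimension count shows that $\operatorname{Range}(\boldsymbol N)$ is exactly the bilinear annihilator of $\ker(\boldsymbol N^{\mathrm T})$.
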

\begin{proof}
An eigenvector of $\boldsymbol M$ associated with $\lambda$ is always given by
$(\boldsymbol v,\boldsymbol 0)$, where
\[
\boldsymbol A\boldsymbol v=\lambda\boldsymbol v.
\]
Suppose that a second linearly independent eigenvector
$(\boldsymbol x,\boldsymbol p)$ exists.
From the lower block of Eq.~\eqref{eq:M},
\[
-\boldsymbol A^{\mathrm T}\boldsymbol p
=
\lambda\boldsymbol p,
\]
or equivalently,
\[
\boldsymbol A^{\mathrm T}\boldsymbol p
=
-\lambda\boldsymbol p.
\]
Since $-\lambda$ is a simple eigenvalue of $\boldsymbol A^{\mathrm T}$,
$\boldsymbol p$ must be a scalar multiple of
$\boldsymbol p_0$.
Without loss of generality, we therefore set
$\boldsymbol p=\boldsymbol p_0$.
The upper block then becomes
\[
(\boldsymbol A-\lambda\boldsymbol I)\boldsymbol x
=
\boldsymbol B\boldsymbol B^{\mathrm T}\boldsymbol p_0.
\]
Since $\lambda$ is a simple eigenvalue of $\boldsymbol A$,
the left nullspace of
$\boldsymbol A-\lambda\boldsymbol I$
is one-dimensional and is spanned by
$\boldsymbol\ell$. Thus,
\[
\operatorname{Null}\!\left(
(\boldsymbol A-\lambda\boldsymbol I)^{\mathrm T}
\right)
=
\operatorname{span}\{\boldsymbol\ell\},
\]
with
\[
\boldsymbol\ell^{\mathrm T}
(\boldsymbol A-\lambda\boldsymbol I)
=
0.
\]
Using the fundamental orthogonality relation
\[
\operatorname{Range}(\boldsymbol A-\lambda\boldsymbol I)
=
\operatorname{Null}\!\left(
(\boldsymbol A-\lambda\boldsymbol I)^{\mathrm T}
\right)^\perp
=
\operatorname{span}\{\boldsymbol\ell\}^{\perp},
\]
the equation
\[
(\boldsymbol A-\lambda\boldsymbol I)\boldsymbol x
=
\boldsymbol B\boldsymbol B^{\mathrm T}\boldsymbol p_0
\]
is solvable if and only if
\[
\boldsymbol B\boldsymbol B^{\mathrm T}\boldsymbol p_0
\in
\operatorname{span}\{\boldsymbol\ell\}^{\perp}.
\]
Equivalently,
\[
\boldsymbol\ell^{\mathrm T}
\boldsymbol B\boldsymbol B^{\mathrm T}
\boldsymbol p_0
=
0.
\]
Hence, a second linearly independent eigenvector associated with
$\lambda$ exists if and only if the pairing
\[
\boldsymbol\ell^{\mathrm T}
\boldsymbol B\boldsymbol B^{\mathrm T}
\boldsymbol p_0
\]
vanishes. Therefore, if the pairing vanishes, the doubled eigenvalue
$\lambda$ is semisimple. If the pairing is nonzero, the geometric
multiplicity of $\lambda$ is one, whereas its algebraic multiplicity is two.
Consequently, $\lambda$ forms a $2\times 2$ Jordan block.
\end{proof}

\begin{corollary}[Inapplicability of the Lyapunov Center Theorem]
Let $\boldsymbol\xi^*$ be an equilibrium of a Hamiltonian system at
which the classical Lyapunov Center Theorem applies, namely,
$\boldsymbol A$ possesses a simple non-resonant purely imaginary
eigenvalue pair $\pm i\omega$.
If
\[
\boldsymbol\ell^{\mathrm T}
\boldsymbol B\boldsymbol B^{\mathrm T}
\boldsymbol p_0
\neq0,
\]
then the doubled eigenvalue $\lambda=i\omega$ of the augmented system is
non-semisimple and forms a $2\times2$ Jordan block. Consequently, the
classical Lyapunov Center Theorem cannot be directly applied to this
eigenvalue pair. Moreover, the linearized augmented system admits no
periodic solution associated with this eigenvalue pair that has a
nonzero adjoint component.
\end{corollary}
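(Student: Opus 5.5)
The first two claims follow directly from the Jordan-block criterion, so the plan is to invoke it and then argue about the hypotheses and the linear solutions. Since $\pm i\omega$ is a simple pair of the Hamiltonian matrix $\boldsymbol A$, the eigenvalue $\lambda=i\omega$ is simple. The structural spectral doubling theorem then shows that $\lambda$ has algebraic multiplicity exactly two in $\boldsymbol M$. The Jordan-block criterion, applied under the nonzero pairing assumption, gives geometric multiplicity one, which is precisely a $2\times2$ Jordan block. The same holds for $-i\omega$ by complex conjugation, because $\boldsymbol A$ and $\boldsymbol B$ are real: replace $\boldsymbol\ell,\boldsymbol p_0$ by their conjugates, and the pairing is conjugated, hence still nonzero.

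For inapplicability, I will point out that the Lyapunov Center Theorem, applied to the augmented $2m$-dimensional Hamiltonian system, requires $\pm i\omega$ to be simple eigenvalues of the linearization $\boldsymbol M$. That is a nonresonance hypothesis in which no other eigenvalue is an integer multiple of $i\omega$, and $k=1$ is included. Here $i\omega$ has algebraic multiplicity two, so the hypothesis fails; indeed the eigenvalue is even non-semisimple. The theorem therefore cannot be invoked as stated. This step is essentially definitional, and I would only state the hypothesis explicitly.

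The substantive part is the claim about linear periodic solutions. I plan to decompose $\mathbb C^{2m}$ into generalized eigenspaces of $\boldsymbol M$ and restrict attention to the generalized eigenspaces of $\pm i\omega$, each spanned by a Jordan chain $\{(\boldsymbol v,\boldsymbol 0),\boldsymbol w\}$. The general solution in the $i\omega$ component is
\[
e^{i\omega t}\bigl(c_1(\boldsymbol v,\boldsymbol 0)+c_2(\boldsymbol w+t(\boldsymbol v,\boldsymbol 0))\bigr).
\]
Such a solution is bounded, and in particular periodic, only if $c_2=0$, so only the secular-free part survives. That part has zero adjoint component. The key remaining step is to show that $\boldsymbol w$ necessarily has nonzero adjoint component, since otherwise setting $c_2\neq0$ might not be needed to excite $\boldsymbol p$. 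To do this, suppose $\boldsymbol w=(\boldsymbol x,\boldsymbol 0)$. Then $(\boldsymbol A-\lambda \boldsymbol I)\boldsymbol x=\boldsymbol v$, which contradicts the simplicity of $\lambda$ for $\boldsymbol A$. Hence any solution in this subspace with nonzero adjoint component must have $c_2\neq0$ and grows linearly.

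Two points remain. First, a periodic solution with period commensurate to $2\pi/\omega$ could in principle mix in components from other eigenvalues. The question is restricted to solutions associated with this eigenvalue pair, so I project onto the $\pm i\omega$ generalized eigenspaces and treat the conjugate block identically. Real solutions are then combinations of a block solution and its conjugate, and the secular term survives unless both $c_2$ coefficients vanish. Second, the main obstacle is making "associated with this eigenvalue pair" precise. I expect to handle it by this spectral projection, noting that the projection commutes with the flow, so periodicity of a solution implies periodicity of its projection.
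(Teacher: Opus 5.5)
Your proposal is correct and follows the route the paper leaves implicit, since it gives no separate proof of this corollary: the doubling theorem gives algebraic multiplicity two, the Jordan-block criterion gives geometric multiplicity one, and any linear solution with a nonzero adjoint component must excite the generalized eigenvector and so carries the secular term $te^{i\omega t}$ that the paper points to in the pendulum and Hill examples. One small correction: showing that the adjoint part of $\boldsymbol w$ is nonzero is not what forces $c_2\neq0$, because that already follows from $(\boldsymbol v,\boldsymbol 0)$ having zero adjoint component; that step instead shows that adjoint-carrying solutions exist at all, so the statement is not vacuous.
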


This situation is examined in the following sections.
For both the pendulum and the $L_2$ equilibrium of
Hill's restricted three-body problem, the cubic-order solvability
conditions yield a nontrivial control-induced periodic family,
confirming the nonlinear bifurcation predicted by the above
corollary.

\section{Application I: the pendulum}
\label{sec:pendulum}

This section applies the general theory developed in
Section~\ref{sec:theory} to the pendulum.
The complete reduction and the resulting bifurcation analysis can be
carried out in closed form.

\subsection{Optimal control formulation}
\label{sec:pendulum-setup}

Consider the pendulum
\begin{equation}
\dot{\bm{x}}
=
f(\bm{x})
+
\bm{B}u,
\qquad
\bm{x}
=
\begin{pmatrix}
q_1\\
q_2
\end{pmatrix},
\qquad
\bm{B}
=
\begin{pmatrix}
0\\
1
\end{pmatrix},
\end{equation}
where
\begin{equation}
f(\bm{x})
=
\begin{pmatrix}
q_2\\
-k^2\sin q_1
\end{pmatrix},
\qquad
\left(k=\sqrt{\frac{g}{l}}\right).
\end{equation}
The minimum-energy optimal control problem is
\begin{equation}
J
=
\frac12
\int_{t_0}^{t_f}
u^2\,dt.
\label{eq:J}
\end{equation}
Introducing the adjoint variable
\(
\bm{p}=(p_1,p_2)^{\mathrm T}
\)
through Pontryagin's minimum principle gives the Hamiltonian
\begin{equation}
H(\bm{x},\bm{p},u)
=
p_1q_2
+
p_2(-k^2\sin q_1+u)
+
\frac12u^2.
\end{equation}
The stationarity condition 
% \[
% \frac{\partial H}{\partial u}=0
% \]
yields
\begin{equation}
u^*=-p_2.
\end{equation}
Substituting the optimal control gives the state--adjoint system
\begin{align}
\dot{\bm{x}}
&=
\left(
\frac{\partial H}{\partial\bm{p}}
\right)^{\mathrm T}
=
\begin{pmatrix}
q_2\\
-k^2\sin q_1-p_2
\end{pmatrix},
\label{eq:canonical_q}
\\
\dot{\bm{p}}
&=
-
\left(
\frac{\partial H}{\partial\bm{x}}
\right)^{\mathrm T}
=
\begin{pmatrix}
k^2 p_2\cos q_1\\
-p_1
\end{pmatrix}.
\label{eq:canonical_p}
\end{align}
The equilibrium corresponding to the downward pendulum is
$(\bm{x},\bm{p})=(\bm{0},\bm{0})$.
Linearizing Eqs.~\eqref{eq:canonical_q}--\eqref{eq:canonical_p} about
this equilibrium gives
\begin{align}
\delta\dot{\bm z}
&=
\bm{M}\,\delta\bm z,
\qquad
\delta\bm z
=
\begin{pmatrix}
\delta\bm{x}\\
\delta\bm{p}
\end{pmatrix},
\\
\bm{M}
&=
\begin{pmatrix}
0 & 1 & 0 & 0\\
-k^2 & 0 & 0 & -1\\
0 & 0 & 0 & k^2\\
0 & 0 & -1 & 0
\end{pmatrix}.
\label{eq:pendulum_linear}
\end{align}
In the following, the parameter is normalized to $k=1$.

\paragraph{Verification of the Jordan-block criterion}
For $k=1$, the linearization of the uncontrolled pendulum dynamics is
\begin{equation}
\bm A
=
\begin{pmatrix}
0 & 1\\
-1 & 0
\end{pmatrix},
\qquad
\bm B=
\begin{pmatrix}
0\\
1
\end{pmatrix},
\end{equation}
with eigenvalues $\lambda=\pm i$.
The corresponding augmented matrix
$\bm M$ is obtained from Eq.~\eqref{eq:pendulum_linear}
by setting $k=1$. Its characteristic polynomial is
\[
(\lambda^2+1)^2=0,
\]
so that the eigenvalues $\lambda=\pm i$ each have algebraic
multiplicity two.
For $\lambda=i$, left eigenvectors of $\bm A$ associated
with $\lambda$ and $-\lambda$ may be chosen as
\[
\bm\ell=
\begin{pmatrix}
i\\
1
\end{pmatrix},
\qquad
\bm p_0=
\begin{pmatrix}
-i\\
1
\end{pmatrix}.
\]
Hence,
\[
\bm\ell^{\mathrm T}\bm B\bm B^{\mathrm T}\bm p_0
=1\neq0.
\]
Therefore, the Jordan-block criterion implies that the doubled eigenvalues are non-semisimple.
Indeed, the Jordan normal form is
\begin{equation}
\bm{J}
=
\begin{pmatrix}
-i & 1 & 0 & 0\\
0 & -i & 0 & 0\\
0 & 0 & i & 1\\
0 & 0 & 0 & i
\end{pmatrix},
\end{equation}
showing that each eigenvalue has geometric multiplicity one.
The associated generalized eigenvectors produce secular terms $te^{\pm it}$, corresponding to oscillations with linearly growing amplitudes.

\subsection{Exact reduction to a second-order system}
The state--adjoint system is exactly equivalent to
\begin{equation}
\ddot q_1+\sin q_1=-p_2, \qquad \ddot p_2+p_2\cos q_1=0,
\label{eq:reduced_pair}
\end{equation}
with
\[
q_2=\dot q_1, \qquad p_1=-\dot p_2.
\]
Equation~\eqref{eq:reduced_pair} reveals the resonance mechanism directly.
Linearization of Eq.~\eqref{eq:reduced_pair} shows that both oscillators
have the same natural frequency $\omega=1$. The physical coordinate
$q_1$ is directly forced by the adjoint variable $p_2$, whereas the
effect of $q_1$ on the $p_2$ dynamics enters only through the nonlinear
coefficient $\cos q_1$ and therefore does not appear at linear order.
Setting $p_2\equiv0$ removes the adjoint dynamics and reduces Eq.~\eqref{eq:reduced_pair} to the classical nonlinear pendulum,
\[
\ddot q_1+\sin q_1=0,
\]
which possesses the familiar one-parameter family of periodic solutions constituting the trivial branch with zero optimal control. 
The following subsection analyzes the nonlinear bifurcation from this
trivial branch using a Lindstedt--Poincar\'e expansion.

\subsection{Nonlinear analysis}
\label{sec:pendulum-nonlinear}
The analysis follows the approach of Richardson~\cite{richardson1980}, who constructed small-amplitude periodic orbits around the collinear libration points of the circular restricted three-body problem by introducing a strained time variable and expanding both the solution and its oscillation frequency.

\subsubsection{Lindstedt--Poincar\'e expansion}
Under the amplitude-reversal symmetry $\varepsilon\rightarrow-\varepsilon$, the solution components $q_1$ and $p_2$ are expanded in odd powers of $\varepsilon$, whereas the frequency $\omega$ is expanded in even powers.
Furthermore, the linear analysis shows that a periodic solution cannot contain a nonzero $p_2$ component at $O(\varepsilon)$.
Since all even powers are excluded by symmetry, the first possible
nonzero adjoint contribution therefore occurs at cubic order.
Accordingly, we assume
\begin{align}
q_1&=\varepsilon Q_1(\tau)+\varepsilon^3Q_3(\tau)+O(\varepsilon^5),\label{eq:q1}\\
p_2&=\varepsilon^3P_3(\tau)+\varepsilon^5P_5(\tau)+O(\varepsilon^7),\label{eq:p2}\\
\omega&=1+\varepsilon^2\omega_2+O(\varepsilon^4),
\label{eq:omega}
\end{align}
where $\tau=\omega t$ is the rescaled time variable accounting for the amplitude-dependent frequency. 
We choose the initial displacement $\varepsilon=q_1(0)$ as the
amplitude parameter and set the phase origin at a symmetry point of
the periodic orbit. Accordingly, we impose the normalization and
symmetry conditions
\begin{equation}
Q_1(0)=1,\qquad Q_3(0)=0,\qquad
Q_1'(0)=0,\qquad P_3'(0)=0.
\label{eq:symmetry_conditions}
\end{equation}
%---------------------------------------------------%
\paragraph{First order ($O(\varepsilon)$)}
%---------------------------------------------------%
Substituting $\tau=\omega t$ into Eqs.~\eqref{eq:reduced_pair} gives
\begin{align}
&\omega^2 q_1''+\sin q_1 = -p_2,\label{eq:reduced_tau1}\\
&\omega^2 p_2''+p_2\cos q_1 = 0,
\label{eq:reduced_tau2}
\end{align}
where the prime denotes differentiation with respect to $\tau$.
Substituting Eqs.~\eqref{eq:q1}--\eqref{eq:omega} gives
\begin{align}
\omega^2 &= 1+2\varepsilon^2\omega_2+O(\varepsilon^4),\\
q_1'' &= \varepsilon Q_1'' +\varepsilon^3Q_3'' +O(\varepsilon^5),\\
\sin q_1 &= \varepsilon Q_1 +\varepsilon^3 \left( Q_3-\frac16Q_1^3 \right) +O(\varepsilon^5).
\end{align}
Since $p_2=O(\varepsilon^3)$, the adjoint forcing does not contribute
at $O(\varepsilon)$. The leading-order equation is therefore
\begin{equation}
Q_1''+Q_1=0.
\end{equation}
The normalization and symmetry conditions \eqref{eq:symmetry_conditions} then give the leading-order solution
\begin{equation}
Q_1(\tau)=\cos\tau.
\end{equation}

%---------------------------------------------------%
\paragraph{Third order ($O(\varepsilon^3)$)}
%---------------------------------------------------%
Collecting the $O(\varepsilon^3)$ terms gives
\begin{equation}
Q_3''+Q_3
=
2\omega_2\cos\tau
+\frac16\cos^3\tau
-P_3(\tau).\label{eq:Q3}
\end{equation}
At the same order, the $p_2$ equation is homogeneous:
\begin{equation}
P_3''+P_3=0.
\end{equation}
Using the symmetry condition $P_3'(0)=0$ in Eq.~\eqref{eq:symmetry_conditions} gives
\begin{equation}
P_3(\tau)=B\cos\tau,
\label{eq:P_3}
\end{equation}
where $B$ is an undetermined adjoint amplitude.
Using $\cos^3\tau =
\frac{3\cos\tau+\cos3\tau}{4}$, Eq.~\eqref{eq:Q3} becomes
\begin{equation}
Q_3''+Q_3 = \left( 2\omega_2-B+\frac18 \right)\cos\tau +\frac1{24}\cos3\tau.
\end{equation}
The forcing proportional to $\cos\tau$ is resonant with the homogeneous solution and would generate a secular term proportional to $\tau\sin\tau$.
Periodic solutions therefore require
\begin{equation}
2\omega_2-B+\frac18=0.
\label{eq:sc1}
\end{equation}
This provides the first relation between the adjoint amplitude $B$
and the frequency correction $\omega_2$.

\paragraph{Fifth order ($O(\varepsilon^5)$)}
Collecting the $O(\varepsilon^5)$ terms gives
\begin{equation}
P_5''+P_5 = -2\omega_2P_3'' +\frac12P_3Q_1^2.
\end{equation}
Using $P_3=B\cos\tau$, $P_3''=-B\cos\tau$ and $Q_1=\cos\tau$, the equation becomes
\begin{equation}
P_5''+P_5 = B\left(
2\omega_2+\frac38 \right)\cos\tau +\frac{B}{8}\cos3\tau.
\end{equation}
As at third order, the resonant fundamental harmonic must vanish to avoid a secular term proportional to $\tau\sin\tau$. Hence,
\begin{equation}
B\left(
2\omega_2+\frac38
\right)=0.
\label{eq:sc2}
\end{equation}
The solution $B=0$ gives $P_3\equiv0$ and corresponds to the continuation of the uncontrolled pendulum family.
For the nontrivial branch ($B\neq0$), Eqs.~\eqref{eq:sc1} and \eqref{eq:sc2} give
\begin{equation}
\omega_2=-\frac3{16}, \qquad B=-\frac14.
\end{equation}
Therefore,
\begin{equation}
P_3(\tau)=-\frac14\cos\tau,
\end{equation}
and the leading-order adjoint amplitude is
\begin{equation}
p_2(0) = -\frac14\varepsilon^3 +O(\varepsilon^5).
\label{eq:pendulum_cubic_law}
\end{equation}
The corresponding frequency and period are
\begin{align}
\omega(\varepsilon)
&=
1-\frac3{16}\varepsilon^2+O(\varepsilon^4),
\\
T(\varepsilon)
&=
\frac{2\pi}{\omega(\varepsilon)}
=
2\pi
\left(
1+\frac3{16}\varepsilon^2
\right)
+O(\varepsilon^4).
\end{align}
Since $\sin q_1=q_1-\frac16q_1^3+O(q_1^5)$, the first nonlinear
correction to the pendulum restoring force appears at
$O(\varepsilon^3)$.
Accordingly, the nontrivial periodic branch bifurcates from the
zero-adjoint branch with the cubic scaling
$p_2(0)=-\varepsilon^3/4+O(\varepsilon^5)$.
The cubic-order coefficient is determined by the two
solvability conditions \eqref{eq:sc1} and \eqref{eq:sc2}.

\subsubsection{Validation of the analytical approximation}
\label{sec:pendulum-numerical}

To assess the accuracy of the cubic-order analytical approximation, the analytical predictions are compared with numerically computed periodic orbits. Figure~\ref{fig:po_family_origin_pend} shows the periodic-orbit family around the origin.
Figure~\ref{fig:p2_compare_pend} compares the analytical and numerical initial costates as a function of the initial amplitude $\varepsilon=q_1(0)$, while Fig.~\ref{fig:period_compare_pend} compares the corresponding periods.
In both cases, the analytical predictions agree well with the numerical results for small amplitudes. As the amplitude increases, neglected higher-order nonlinear terms gradually become significant, leading to slight deviations from the numerical solutions.

\begin{figure}[tb]
\centering
\includegraphics[width=0.8\linewidth]{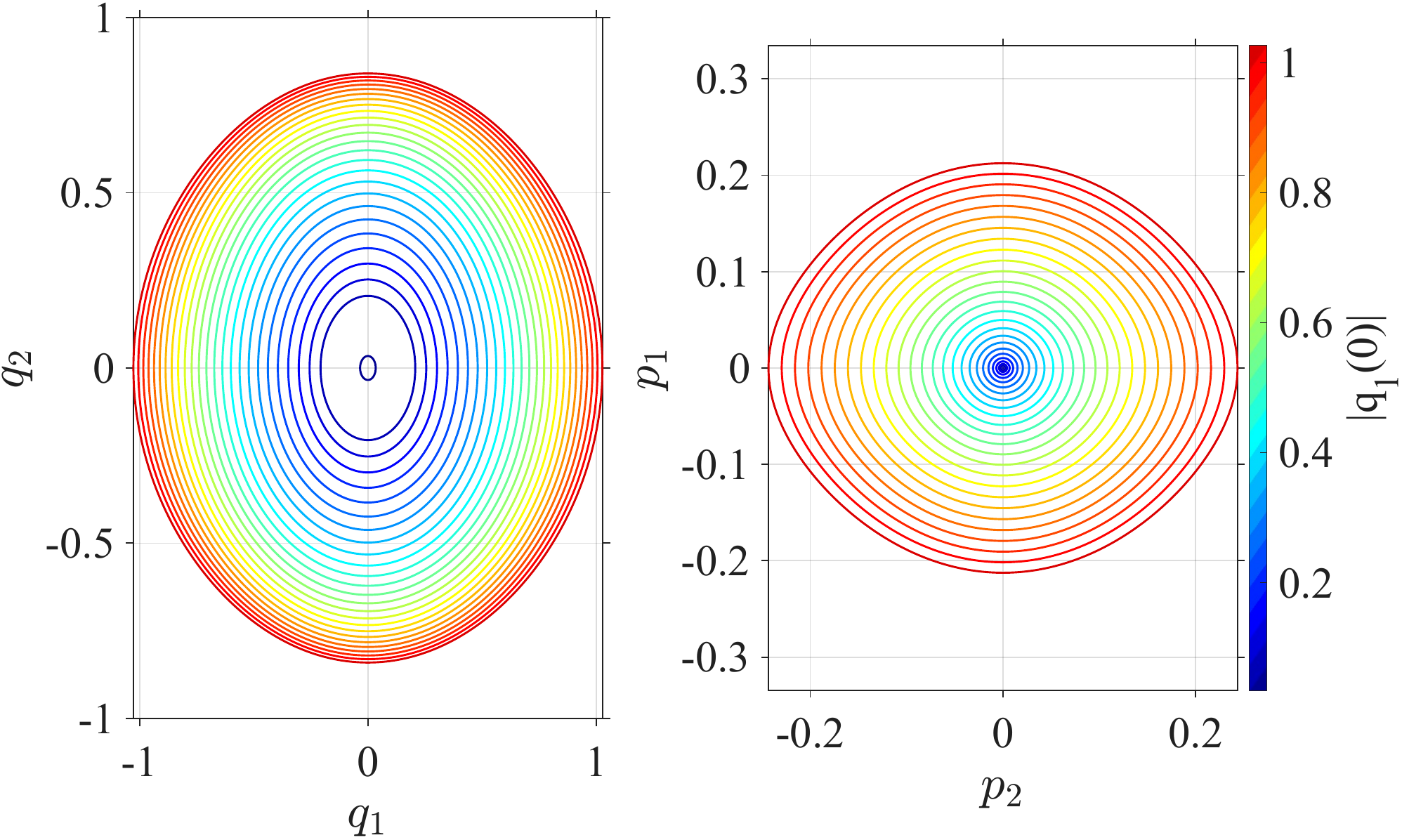}
\caption{Numerically computed periodic-orbit family around the origin.}
\label{fig:po_family_origin_pend}
%\end{figure}
%
%\begin{figure}[tb]
\centering
\includegraphics[width=0.6\linewidth]{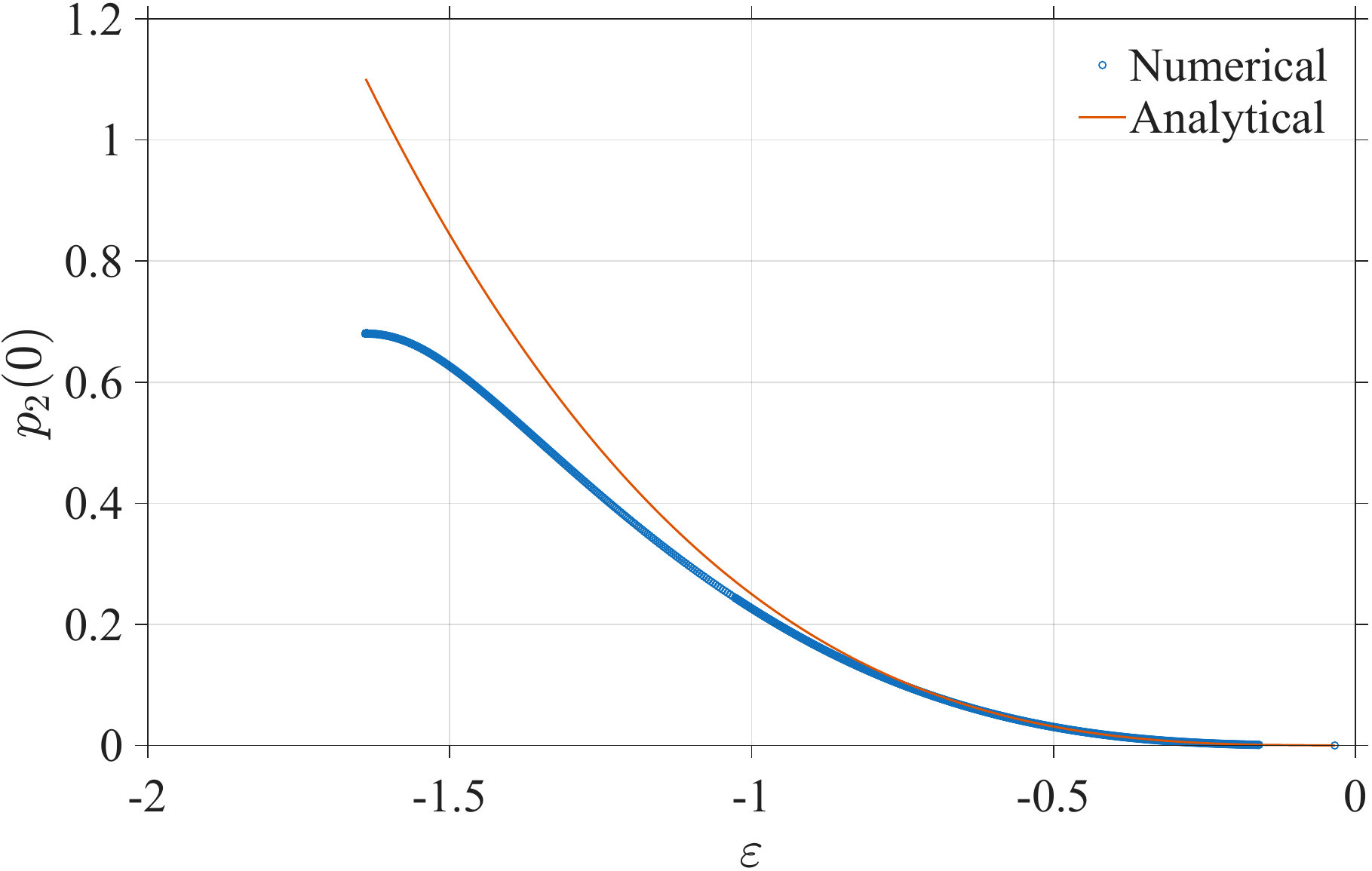}
\caption{Comparison of the initial costate $p_2(0)$ between the analytical approximation and the numerical periodic-orbit family.}
\label{fig:p2_compare_pend}
%\end{figure}
%
%\begin{figure}[tb]
\centering
\includegraphics[width=0.6\linewidth]{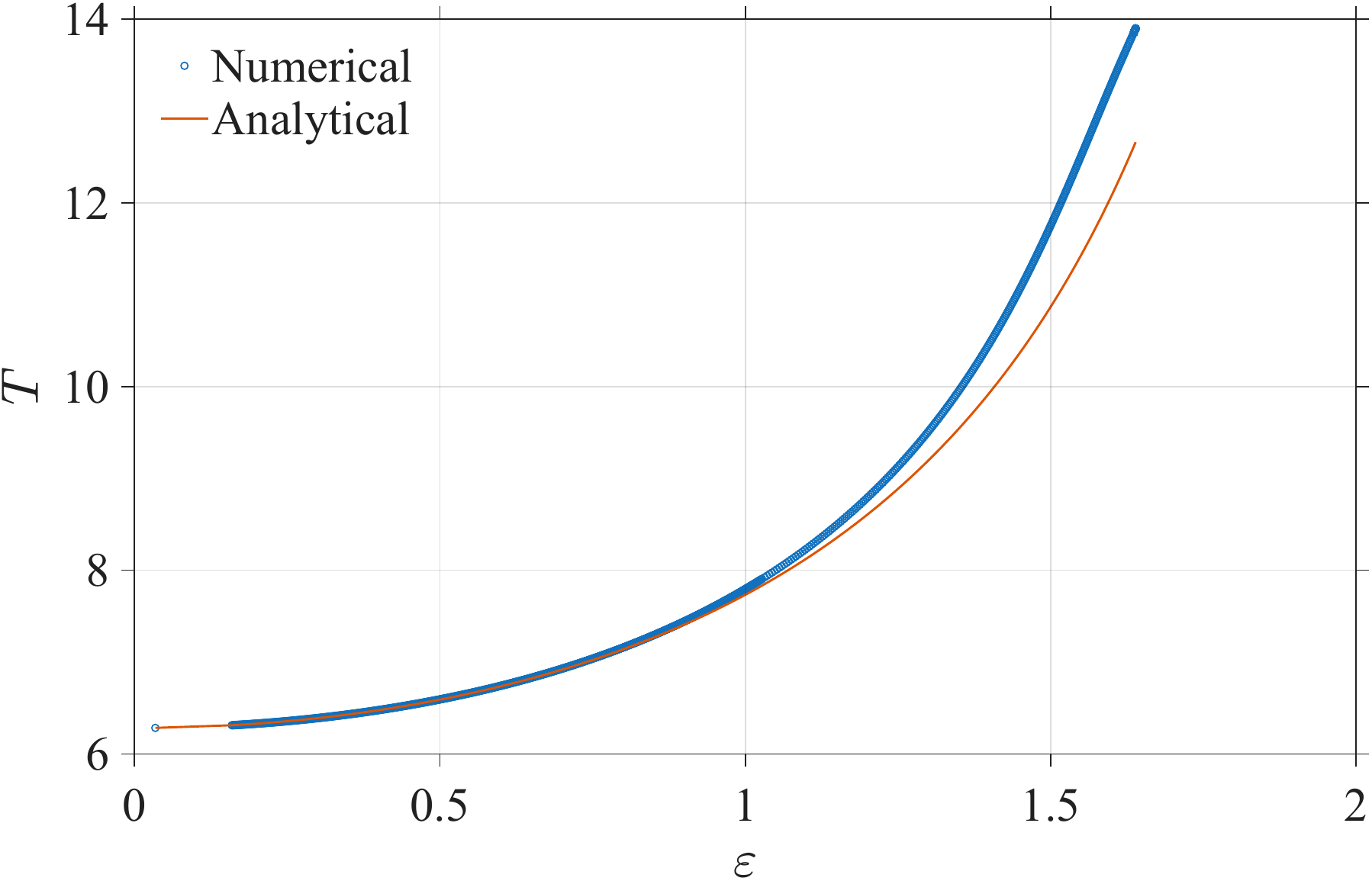}
\caption{Comparison of the orbital period between the analytical approximation and the numerical periodic-orbit family.}
\label{fig:period_compare_pend}
\end{figure}
\section{Application II: the planar $L_2$ equilibrium of Hill's restricted three-body problem}
\label{sec:hill3bp}

This section applies the general theory to the planar $L_2$ equilibrium of the planar Hill restricted three-body problem (Hill3BP).

\subsection{Optimal control problem}
\label{sec:hill3bp-setup}

The planar Hill3BP is described by the pseudo-potential
\cite{szebehely1967,scheeres2012}
\begin{equation}
U(x,y)=\frac1r+\frac32x^2,
\qquad
r=\sqrt{x^2+y^2},
\end{equation}
with the uncontrolled equations of motion
\begin{equation}
\ddot x-2\dot y=U_x,\qquad
\ddot y+2\dot x=U_y.
\end{equation}
Introducing
\begin{equation}
\boldsymbol r=
\begin{pmatrix}
x\\y
\end{pmatrix},
\qquad
\boldsymbol v=
\begin{pmatrix}
\dot x\\\dot y
\end{pmatrix},
\qquad
\boldsymbol x=
\begin{pmatrix}
\boldsymbol r\\
\boldsymbol v
\end{pmatrix},
\end{equation}
the controlled equations can be written as
\begin{equation}
\dot{\boldsymbol x}
=
\begin{pmatrix}
\boldsymbol v\\
2\boldsymbol J_a\boldsymbol v+\nabla U(\boldsymbol r)
\end{pmatrix}
+
\boldsymbol B\boldsymbol u,
\qquad
\boldsymbol B=
\begin{pmatrix}
\boldsymbol 0\\
\boldsymbol I
\end{pmatrix},
\label{eq:hill_controlled}
\end{equation}
where
\begin{equation}
\boldsymbol J_a=
\begin{pmatrix}
0&1\\
-1&0
\end{pmatrix}.
\end{equation}
For the minimum-energy cost
\begin{equation}
J
=
\frac12\int_{t_0}^{t_f}
\boldsymbol u^{\mathrm T}\boldsymbol u\,dt,
\end{equation}
Pontryagin's minimum principle gives the Hamiltonian
\begin{equation}
H(\boldsymbol x,\boldsymbol p,\boldsymbol u)
=
\boldsymbol p_r^{\mathrm T}\boldsymbol v
+
\boldsymbol p_v^{\mathrm T}
\left(
2\boldsymbol J_a\boldsymbol v
+\nabla U(\boldsymbol r)
+\boldsymbol u
\right)
+
\frac12\boldsymbol u^{\mathrm T}\boldsymbol u.
\end{equation}
The stationarity condition gives
\begin{equation}
\boldsymbol u^*=-\boldsymbol p_v.
\end{equation}
Substituting the optimal control yields the reduced Hamiltonian
\begin{equation}
H(\boldsymbol x,\boldsymbol p)
=
\boldsymbol p_r^{\mathrm T}\boldsymbol v
+
\boldsymbol p_v^{\mathrm T}
\left(
2\boldsymbol J_a\boldsymbol v
+\nabla U(\boldsymbol r)
\right)
-
\frac12\boldsymbol p_v^{\mathrm T}\boldsymbol p_v.
\end{equation}
Minimum-energy optimal control problems for the Hill3BP have been
considered in Refs.~\cite{tsuruta2024new,tsuruta2026periodic,pan2026fixed}.
The corresponding canonical equations are 
\begin{align}
\dot{\boldsymbol x}
&=
\left(
\frac{\partial H}{\partial\boldsymbol p}
\right)^{\mathrm T}
=
\begin{pmatrix}
\boldsymbol v\\
2\boldsymbol J_a\boldsymbol v
+
\nabla U(\boldsymbol r)
-
\boldsymbol p_v
\end{pmatrix},\label{eq:hill_canonical_x}
\\
\dot{\boldsymbol p}
&=
-
\left(
\frac{\partial H}{\partial\boldsymbol x}
\right)^{\mathrm T}
=
\begin{pmatrix}
-U_{rr}(\boldsymbol r)\boldsymbol p_v\\
-\boldsymbol p_r
+
2\boldsymbol J_a\boldsymbol p_v
\end{pmatrix}.
\label{eq:hill_canonical_p}
\end{align}
Linearizing Eqs.~\eqref{eq:hill_canonical_x} and \eqref{eq:hill_canonical_p} about the equilibrium
$(\boldsymbol x^*,\boldsymbol p^*)
=
((\boldsymbol r_{L_2},\boldsymbol0),(\boldsymbol0,\boldsymbol0))$
yields
\begin{equation}
\begin{pmatrix}
\delta\dot{\boldsymbol x}\\
\delta\dot{\boldsymbol p}
\end{pmatrix}
=
\boldsymbol M
\begin{pmatrix}
\delta\boldsymbol x\\
\delta\boldsymbol p
\end{pmatrix},
\end{equation}
with
\begin{equation}
\boldsymbol M=
\begin{pmatrix}
\bm0 & \bm I & \bm0 & \bm0\\
U_{rr}^* & 2\boldsymbol J_a & \bm0 & -\bm I\\
\bm0 & \bm0 & \bm0 & -U_{rr}^*\\
\bm0 & \bm0 & -\bm I & 2\boldsymbol J_a
\end{pmatrix},
\label{eq:M_hill}
\end{equation}
where $\boldsymbol r_{L_2}=\left(\left(\frac13\right)^{1/3},0\right)^{\mathrm T}$, and $U_{rr}^*$ denotes the Hessian of the Hill potential evaluated at $\boldsymbol r_{L_2}$.

The linearization of the natural (uncontrolled) in-plane dynamics at
$L_2$ has characteristic polynomial
\begin{equation}
\lambda^4-2\lambda^2-27=0,
\end{equation}
whose eigenvalues are $\lambda=\pm\lambda_s$ and
$\lambda=\pm i\omega_0$, where
$\lambda_s=\sqrt{1+2\sqrt7}$ and
$\omega_0=\sqrt{2\sqrt7-1}$.
These eigenvalues give the classical saddle$\times$center structure
\cite{gomez2001}.

The corresponding augmented $8\times8$ system has characteristic
polynomial
\begin{equation}
\bigl(\lambda^4-2\lambda^2-27\bigr)^2=0,
\end{equation}
confirming the exact spectral doubling.
For the center eigenvalues, direct evaluation of the pairing condition
introduced in Section~\ref{sec:theory} gives
\begin{equation}
\boldsymbol\ell^{\mathrm T}
\boldsymbol B\boldsymbol B^{\mathrm T}
\boldsymbol p_0
\neq0.
\end{equation}
Hence, by the Jordan-block criterion, each doubled center eigenvalue
forms a $2\times2$ Jordan block.

\subsection{Reduction to a second-order system}

The augmented first-order Hamiltonian system can be rewritten
exactly as the coupled second-order equations
\begin{equation}
\ddot{\boldsymbol r}
-2\boldsymbol J_a\dot{\boldsymbol r}
-\nabla U(\boldsymbol r)
=
-\boldsymbol p_v,
\qquad
\ddot{\boldsymbol p}_v
-2\boldsymbol J_a\dot{\boldsymbol p}_v
-\nabla^2U(\boldsymbol r)\boldsymbol p_v
=
\boldsymbol0,
\label{eq:reduced_vec}
\end{equation}
which form the starting point for the nonlinear analysis. 
Setting $\boldsymbol p_v\equiv\boldsymbol0$ reduces Eq.~\eqref{eq:reduced_vec} to the uncontrolled Hill equations.
The periodic family bifurcating from the $L_2$ equilibrium is the classical planar Lyapunov family, which constitutes the zero-adjoint branch.

\subsection{Nonlinear analysis}
\label{sec:hill3bp-nonlinear}

We next construct small-amplitude periodic solutions bifurcating from
the natural planar Lyapunov family. The derivation follows the
perturbation procedure established for the pendulum, and only the
solvability conditions required to determine the bifurcating branch
are summarized below.

\subsubsection{Lindstedt--Poincar\'e expansion}

Let
\begin{equation}
\boldsymbol r=\boldsymbol r_{L_2}+\boldsymbol X,
\qquad
\boldsymbol p_v=\boldsymbol P,
\end{equation}
where $\boldsymbol X$ denotes the displacement from the $L_2$
equilibrium. Introducing the strained time
\begin{equation}
\tau=\omega t,
\end{equation}
we expand
\begin{align}
\boldsymbol X(\tau)
&=
\varepsilon\boldsymbol X_1
+\varepsilon^2\boldsymbol X_2
+\varepsilon^3\boldsymbol X_3
+\cdots,
\\
\boldsymbol P(\tau)
&=
\varepsilon\boldsymbol P_1
+\varepsilon^2\boldsymbol P_2
+\varepsilon^3\boldsymbol P_3
+\cdots,
\\
\omega
&=
\omega_0
+\varepsilon\omega_1
+\varepsilon^2\omega_2
+\cdots,
\end{align}
where $\varepsilon$ denotes the oscillation amplitude.
The linear analysis in Section~\ref{sec:hill3bp-setup} shows that the
doubled center eigenvalues form Jordan blocks. As shown below, this
structure precludes a nonzero periodic adjoint component at linear
order, while the order at which a nonzero adjoint contribution first
appears is determined by the higher-order solvability conditions.

%---------------------------------------------------%
\paragraph{First order ($O(\varepsilon)$)}
%---------------------------------------------------%

Collecting the $O(\varepsilon)$ terms of Eq.~\eqref{eq:reduced_vec}
gives
\begin{align}
\omega_0^2\boldsymbol X_1''
-2\omega_0\boldsymbol J_a\boldsymbol X_1'
-U_{rr}^*\boldsymbol X_1
&=
-\boldsymbol P_1,
\label{eq:hill_first_state}
\\
\omega_0^2\boldsymbol P_1''
-2\omega_0\boldsymbol J_a\boldsymbol P_1'
-U_{rr}^*\boldsymbol P_1
&=
\boldsymbol0,
\label{eq:hill_first_adjoint}
\end{align}
where
\begin{equation}
U_{rr}^*
=
\begin{pmatrix}
9&0\\
0&-3
\end{pmatrix}.
\end{equation}
Equation~\eqref{eq:hill_first_adjoint} admits a periodic solution at
the natural frequency $\omega_0$. However, a nonzero component of
$\boldsymbol P_1$ in this mode resonantly forces
Eq.~\eqref{eq:hill_first_state}. As established by the Jordan-block
criterion in Section~\ref{sec:hill3bp-setup}, the resulting state
response contains a secular term and is therefore incompatible with
periodicity. Hence, $\boldsymbol P_1=\boldsymbol0$.

The first-order state equation then reduces to
\begin{equation}
\omega_0^2\boldsymbol X_1''
-2\omega_0\boldsymbol J_a\boldsymbol X_1'
-U_{rr}^*\boldsymbol X_1
=
\boldsymbol0.
\end{equation}
Choosing the phase and normalizing the leading-order $x$-amplitude,
we write
\begin{equation}
\boldsymbol X_1(\tau)
=
\begin{pmatrix}
\cos\tau\\
c_y\sin\tau
\end{pmatrix}.
\label{eq:X1_hill}
\end{equation}
Substitution gives
\begin{align}
\omega_0^2+9+2\omega_0c_y&=0,
\\
(\omega_0^2-3)c_y+2\omega_0&=0.
\end{align}
Eliminating $c_y$ yields
\begin{equation}
\omega_0^4+2\omega_0^2-27=0,
\end{equation}
and hence
\begin{equation}
\omega_0=\sqrt{2\sqrt7-1},
\qquad
c_y
=
-\frac{2\omega_0}{\omega_0^2-3}
=
-\frac{\omega_0^2+9}{2\omega_0}.
\label{eq:omega0_cy}
\end{equation}

%---------------------------------------------------%
\paragraph{Second order ($O(\varepsilon^2)$)}
%---------------------------------------------------%

The nonlinear terms are obtained by expanding the gradient of the Hill
potential about the equilibrium $\boldsymbol r_{L_2}$:
\begin{equation}
\begin{aligned}
\nabla U(\boldsymbol r)
={}&
U_{rr}^*\boldsymbol X
+\frac12U_{rrr}^*(\boldsymbol X,\boldsymbol X)
+\frac16U_{rrrr}^*
(\boldsymbol X,\boldsymbol X,\boldsymbol X)
+O(\|\boldsymbol X\|^4),
\end{aligned}
\label{eq:taylorU}
\end{equation}
where all derivative tensors are evaluated at
$\boldsymbol r_{L_2}$.
Using $\boldsymbol P_1=\boldsymbol0$ and collecting the
$O(\varepsilon^2)$ terms gives
\begin{align}
\omega_0^2\boldsymbol X_2''
-2\omega_0\boldsymbol J_a\boldsymbol X_2'
-U_{rr}^*\boldsymbol X_2
={}&
\boldsymbol F_2
-\boldsymbol P_2
-2\omega_0\omega_1\boldsymbol X_1''
+2\omega_1\boldsymbol J_a\boldsymbol X_1',
\label{eq:O2_state}
\\
\omega_0^2\boldsymbol P_2''
-2\omega_0\boldsymbol J_a\boldsymbol P_2'
-U_{rr}^*\boldsymbol P_2
={}&
\boldsymbol0,
\label{eq:O2_adjoint}
\end{align}
where
\begin{equation}
\boldsymbol F_2
=
\frac12U_{rrr}^*
(\boldsymbol X_1,\boldsymbol X_1).
\label{eq:F2}
\end{equation}
The periodic solution of Eq.~\eqref{eq:O2_adjoint} can be written as
\begin{equation}
\boldsymbol P_2
=
B_2\boldsymbol X_1,
\label{eq:P2_hill}
\end{equation}
where $B_2$ is an undetermined amplitude.
Since the quadratic forcing $\boldsymbol F_2$ contains only the
zeroth and second harmonics, the fundamental-harmonic part of the
right-hand side of Eq.~\eqref{eq:O2_state} arises only from
$\boldsymbol P_2$ and the frequency correction $\omega_1$ and has
coefficients
\begin{align}
r_{2,x}
&=
-B_2+2\omega_0\omega_1+2c_y\omega_1,
\\
s_{2,y}
&=
-B_2c_y+2\omega_0c_y\omega_1+2\omega_1.
\end{align}
For a bounded periodic solution, these coefficients must satisfy
\begin{equation}
r_{2,x}+c_y s_{2,y}=0,
\label{eq:O2_solvability}
\end{equation}
because the coefficient matrix associated with the fundamental
harmonic is singular, with $(1,c_y)^\top$ as its null vector.
Substitution gives
\begin{equation}
B_2
=
\left(
2\omega_0+\frac{4c_y}{1+c_y^2}
\right)\omega_1.
\label{eq:B2_omega1}
\end{equation}
Thus, the second-order analysis relates the adjoint amplitude $B_2$
to the first frequency correction $\omega_1$, but does not determine
them individually. Their values are determined by the third-order
analysis below.

%---------------------------------------------------%
\paragraph{Third order ($O(\varepsilon^3)$)}
%---------------------------------------------------%

At $O(\varepsilon^3)$, the adjoint equation gives
\begin{align}
\omega_0^2\boldsymbol P_3''
-2\omega_0\boldsymbol J_a\boldsymbol P_3'
-U_{rr}^*\boldsymbol P_3
=-2\omega_0\omega_1\boldsymbol P_2''
+2\omega_1\boldsymbol J_a\boldsymbol P_2'+
U_{rrr}^*(\boldsymbol X_1,\boldsymbol P_2).
\label{eq:O3_adjoint_general}
\end{align}
The last term contains only the zeroth and second harmonics and hence
does not contribute to the fundamental-harmonic consistency
condition. Using $\boldsymbol P_2=B_2\boldsymbol X_1$, the
fundamental-harmonic condition gives
\begin{equation}
B_2\omega_1
\left[
2\omega_0(1+c_y^2)+4c_y
\right]
=0.
\label{eq:O3_adjoint_sol}
\end{equation}
Since the quantity in brackets is nonzero, combining
Eq.~\eqref{eq:O3_adjoint_sol} with Eq.~\eqref{eq:B2_omega1} yields
\begin{equation}
\omega_1=0,
\qquad
B_2=0,
\qquad
\boldsymbol P_2=\boldsymbol0.
\label{eq:P2_omega1_zero}
\end{equation}
Thus, the first nonzero periodic adjoint contribution can occur at
$O(\varepsilon^3)$. The third-order adjoint equation then reduces to
\begin{equation}
\omega_0^2\boldsymbol P_3''
-2\omega_0\boldsymbol J_a\boldsymbol P_3'
-U_{rr}^*\boldsymbol P_3
=
\boldsymbol0,
\end{equation}
and its periodic solution can be written as
\begin{equation}
\boldsymbol P_3(\tau)
=
B_3\boldsymbol X_1(\tau),
\label{eq:P3}
\end{equation}
where $B_3$ is an undetermined amplitude.
With $\omega_1=0$ and $\boldsymbol P_2=\boldsymbol0$, the second-order
state correction contains only the zeroth and second harmonics and can
be written as
\begin{equation}
\boldsymbol X_2(\tau)
=
\begin{pmatrix}
p_{2,0}+p_{2,2}\cos2\tau\\
q_{2,2}\sin2\tau
\end{pmatrix}.
\label{eq:X2}
\end{equation}

At the same order, the state equation becomes
\begin{equation}
\omega_0^2\boldsymbol X_3''
-2\omega_0\boldsymbol J_a\boldsymbol X_3'
-U_{rr}^*\boldsymbol X_3
=
\boldsymbol F_3(\tau)-\boldsymbol P_3(\tau),
\label{eq:X3eq}
\end{equation}
where
\begin{align}
\boldsymbol F_3
=U_{rrr}^*(\boldsymbol X_1,\boldsymbol X_2)
+\frac16U_{rrrr}^*
(\boldsymbol X_1,\boldsymbol X_1,\boldsymbol X_1)
-2\omega_0\omega_2\boldsymbol X_1''
+2\omega_2\boldsymbol J_a\boldsymbol X_1'.
\label{eq:F3}
\end{align}
The forcing contains the fundamental and third harmonics. The third
harmonic is nonresonant, whereas the fundamental harmonic must satisfy
the same consistency condition as at second order. This gives
\begin{equation}
-(1+c_y^2)B_3
+
\left[
2\omega_0(1+c_y^2)+4c_y
\right]\omega_2
+
\left(
r_3^{\rm nl}+c_y s_3^{\rm nl}
\right)
=0,
\label{eq:SC1_unnormalized}
\end{equation}
where $r_3^{\rm nl}$ and $s_3^{\rm nl}$ denote the
fundamental-harmonic coefficients generated by the nonlinear terms in
Eq.~\eqref{eq:F3}. Equivalently,
\begin{equation}
B_3+\beta_1\omega_2+\gamma_1=0,
\label{eq:SC1}
\end{equation}
where
\begin{equation}
\beta_1
=
-2\omega_0-\frac{4c_y}{1+c_y^2},
\qquad
\gamma_1
=
-\frac{r_3^{\rm nl}+c_y s_3^{\rm nl}}{1+c_y^2}.
\label{eq:beta1gamma1}
\end{equation}

%---------------------------------------------------%
\paragraph{Fourth order ($O(\varepsilon^4)$)}
%---------------------------------------------------%

At $O(\varepsilon^4)$, the adjoint equation takes the form
\begin{equation}
\omega_0^2\boldsymbol P_4''
-2\omega_0\boldsymbol J_a\boldsymbol P_4'
-U_{rr}^*\boldsymbol P_4
=
\boldsymbol G_4(\tau),
\label{eq:P4eq}
\end{equation}
where
\begin{equation}
\boldsymbol G_4
=
\bigl[U_{rrr}^*\!\cdot\boldsymbol X_1\bigr]
\boldsymbol P_3.
\label{eq:G4}
\end{equation}
Since both $\boldsymbol X_1$ and $\boldsymbol P_3$ contain only the
fundamental harmonic, their product contains only the zeroth and
second harmonics. Thus, $\boldsymbol G_4$ has no resonant fundamental
component, and no additional solvability condition arises at this
order.
Accordingly, the periodic fourth-order adjoint correction can be
written as
\begin{equation}
\boldsymbol P_4(\tau)
=
B_3
\begin{pmatrix}
p_{4,0}+p_{4,2}\cos2\tau\\
q_{4,2}\sin2\tau
\end{pmatrix},
\label{eq:P4}
\end{equation}
where the coefficients are determined by matching the zeroth- and
second-harmonic terms.

%---------------------------------------------------%
\paragraph{Fifth order ($O(\varepsilon^5)$)}
%---------------------------------------------------%

At $O(\varepsilon^5)$, the adjoint equation becomes
\begin{equation}
\omega_0^2\boldsymbol P_5''
-2\omega_0\boldsymbol J_a\boldsymbol P_5'
-U_{rr}^*\boldsymbol P_5
=
\boldsymbol G_5(\tau),
\label{eq:P5eq}
\end{equation}
where
\begin{align}
\boldsymbol G_5
={}&
-2\omega_0\omega_2\boldsymbol P_3''
+2\omega_2\boldsymbol J_a\boldsymbol P_3'
+\bigl[U_{rrr}^*\!\cdot\boldsymbol X_1\bigr]\boldsymbol P_4
\nonumber\\
&+
\bigl[U_{rrr}^*\!\cdot\boldsymbol X_2\bigr]\boldsymbol P_3
+\frac12
\bigl[
U_{rrrr}^*(\boldsymbol X_1,\boldsymbol X_1)
\bigr]\boldsymbol P_3.
\label{eq:G5}
\end{align}
Because $\boldsymbol P_3$ contains the fundamental harmonic, whereas
$\boldsymbol X_2$ and $\boldsymbol P_4$ contain only the zeroth and
second harmonics, $\boldsymbol G_5$ contains only the fundamental and
third harmonics. The third harmonic is nonresonant, while the
fundamental harmonic must satisfy the same consistency condition as
at second order.
Since every term in $\boldsymbol G_5$ is proportional to $B_3$, this
condition factorizes as
\begin{equation}
B_3
\left(
\alpha_2+\beta_2\omega_2
\right)
=
0,
\label{eq:SC2}
\end{equation}
where $\alpha_2$ and $\beta_2$ are determined by the
fundamental-harmonic coefficients of Eq.~\eqref{eq:G5}.

Equations~\eqref{eq:SC1} and \eqref{eq:SC2} therefore define two
branches. The solution
\begin{equation}
B_3=0
\end{equation}
corresponds to the continuation of the zero-adjoint planar Lyapunov
family. For the nontrivial branch, $B_3\neq0$, Eq.~\eqref{eq:SC2}
gives
\begin{equation}
\omega_2
=
-\frac{\alpha_2}{\beta_2},
\label{eq:omega2_solution}
\end{equation}
and substitution into Eq.~\eqref{eq:SC1} determines
\begin{equation}
B_3
=
-\beta_1\omega_2-\gamma_1.
\label{eq:B3_solution}
\end{equation}
Accordingly, the adjoint variable has the expansion
\begin{equation}
\boldsymbol P(\tau)
=
\varepsilon^3\boldsymbol P_3(\tau)
+\varepsilon^4\boldsymbol P_4(\tau)
+O(\varepsilon^5),
\end{equation}
and the nontrivial branch departs from the zero-adjoint branch with
the leading cubic scaling
\begin{equation}
\boldsymbol P(\tau)
=
\varepsilon^3 B_3\boldsymbol X_1(\tau)
+O(\varepsilon^4).
\label{eq:P_leading}
\end{equation}

\subsubsection{Numerical evaluation of the coefficients}
\label{sec:numerical_evaluation}

Substituting the derivatives of the Hill potential at $L_2$ into the
preceding expressions gives the coefficients summarized in
Table~\ref{tab:hill_coefficients}.
Using these values, the two solvability conditions
Eqs.~\eqref{eq:SC1} and \eqref{eq:SC2} become
\begin{equation}
B_3
=
10.1516+3.0067\,\omega_2,
\label{eq:Bomega}
\end{equation}
and
\begin{equation}
B_3
\left(
343.880+33.951\,\omega_2
\right)
=
0.
\label{eq:Bomega_fifth}
\end{equation}
For the nontrivial branch, $B_3\neq0$, and hence
\begin{equation}
\omega_2\approx-10.129,
\qquad
B_3\approx-20.303.
\label{eq:hill_omega2_B3}
\end{equation}
Thus, the leading-order frequency and adjoint expansions of the
nontrivial branch are
\begin{align}
\omega
&=
\omega_0-10.129\,\varepsilon^2+O(\varepsilon^3),
\\
\boldsymbol P(\tau)
&=
-20.303\,\varepsilon^3\boldsymbol X_1(\tau)
+O(\varepsilon^4).
\label{eq:hill_final_asymptotic}
\end{align}

\begin{table}[tbph]
\centering
\caption{Numerical coefficients obtained from the
Lindstedt--Poincar\'e analysis.}
\label{tab:hill_coefficients}
\begin{tabular}{ccc}
\hline
Order & Coefficient & Value \\
\hline
$O(\varepsilon^2)$
& $p_{2,0}$ & $-2.98960$ \\
& $p_{2,2}$ & $1.29968$ \\
& $q_{2,2}$ & $0.70951$ \\
\hline
$O(\varepsilon^3)$
& $\beta_1$ & $-3.0067$ \\
& $\gamma_1$ & $-10.1516$ \\
\hline
$O(\varepsilon^4)$
& $p_{4,0}$ & $-5.97921$ \\
& $p_{4,2}$ & $2.59936$ \\
& $q_{4,2}$ & $1.41902$ \\
\hline
$O(\varepsilon^5)$
& $\alpha_2$ & $343.880$ \\
& $\beta_2$ & $33.951$ \\
\hline
\end{tabular}
\end{table}

\subsubsection{Comparison with numerical periodic orbits}

The analytical approximation is validated by comparison with
numerically computed periodic orbits obtained by differential
correction.
Figure~\ref{fig:po_family_L2_hill3bp} shows the family of controlled periodic
orbits around the planar $L_2$ equilibrium.
Figure~\ref{fig:pvx_compare_hill3bp} compares the analytical and numerical
values of the initial adjoint variable as a function of the initial
amplitude $\varepsilon$. Figure~\ref{fig:period_compare_hill3bp} compares the
corresponding periods.
The analytical approximation accurately reproduces the numerical periodic-orbit family for small amplitudes. As the amplitude increases, neglected higher-order nonlinear terms become significant, leading to increasing deviations from the numerical solutions.

\begin{figure}[tb]
\centering
\includegraphics[width=0.8\linewidth]{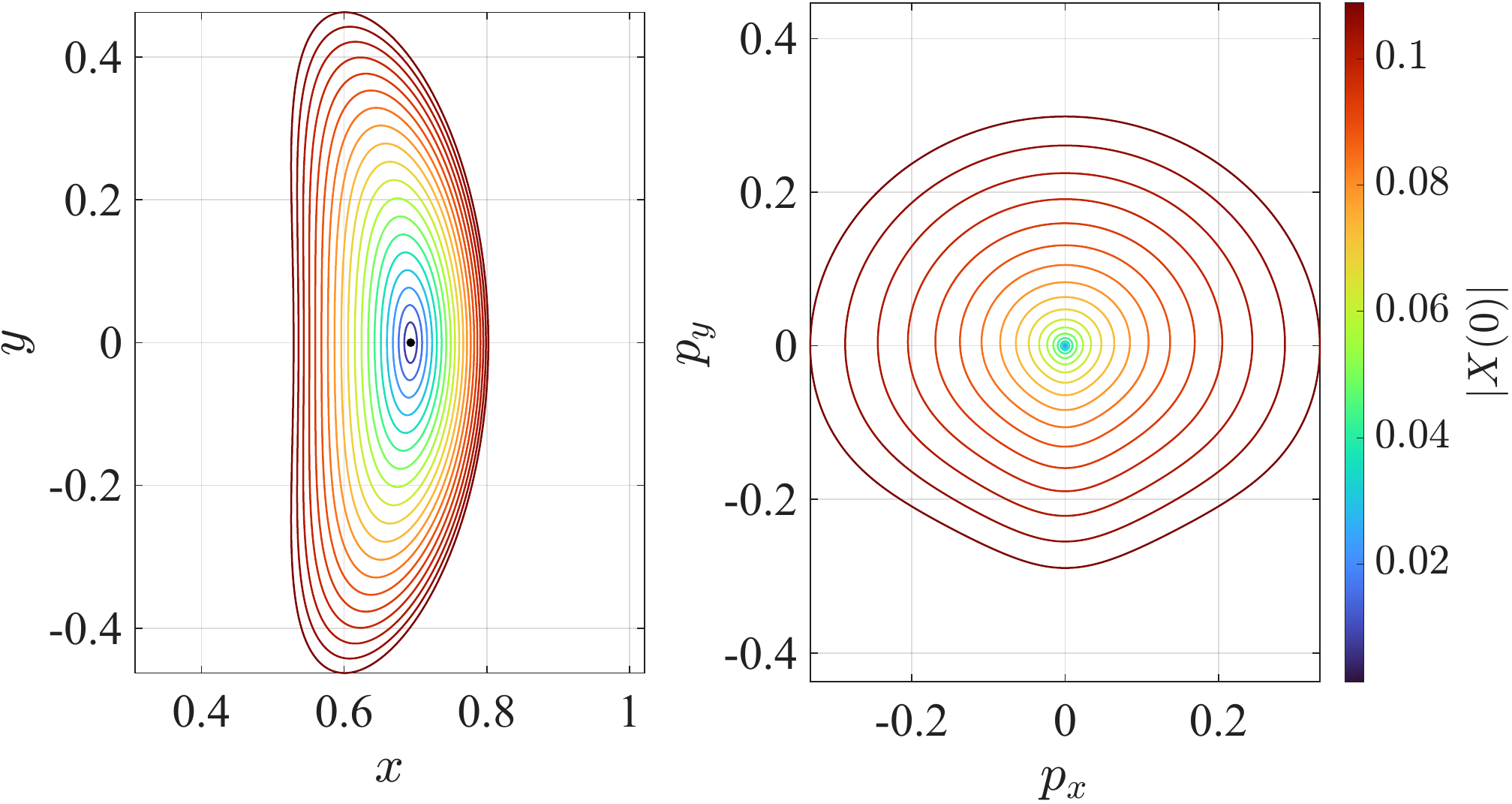}
\caption{Numerically computed periodic-orbit family around $L_2$.}
\label{fig:po_family_L2_hill3bp}
\centering
\includegraphics[width=0.6\linewidth]{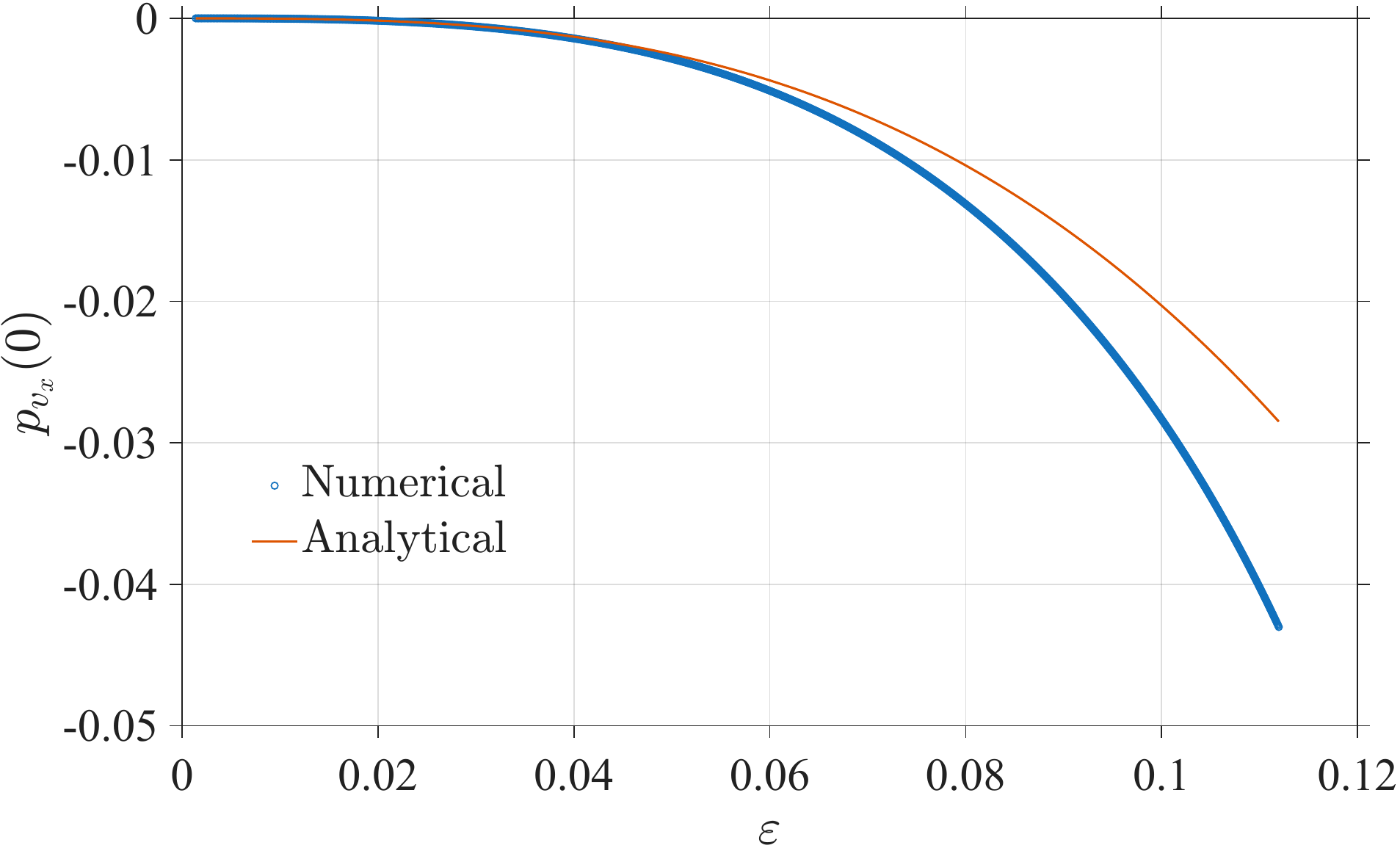}
\caption{Comparison between the analytical approximation and the
numerically computed periodic-orbit family.}
\label{fig:pvx_compare_hill3bp}
\centering
\includegraphics[width=0.6\linewidth]{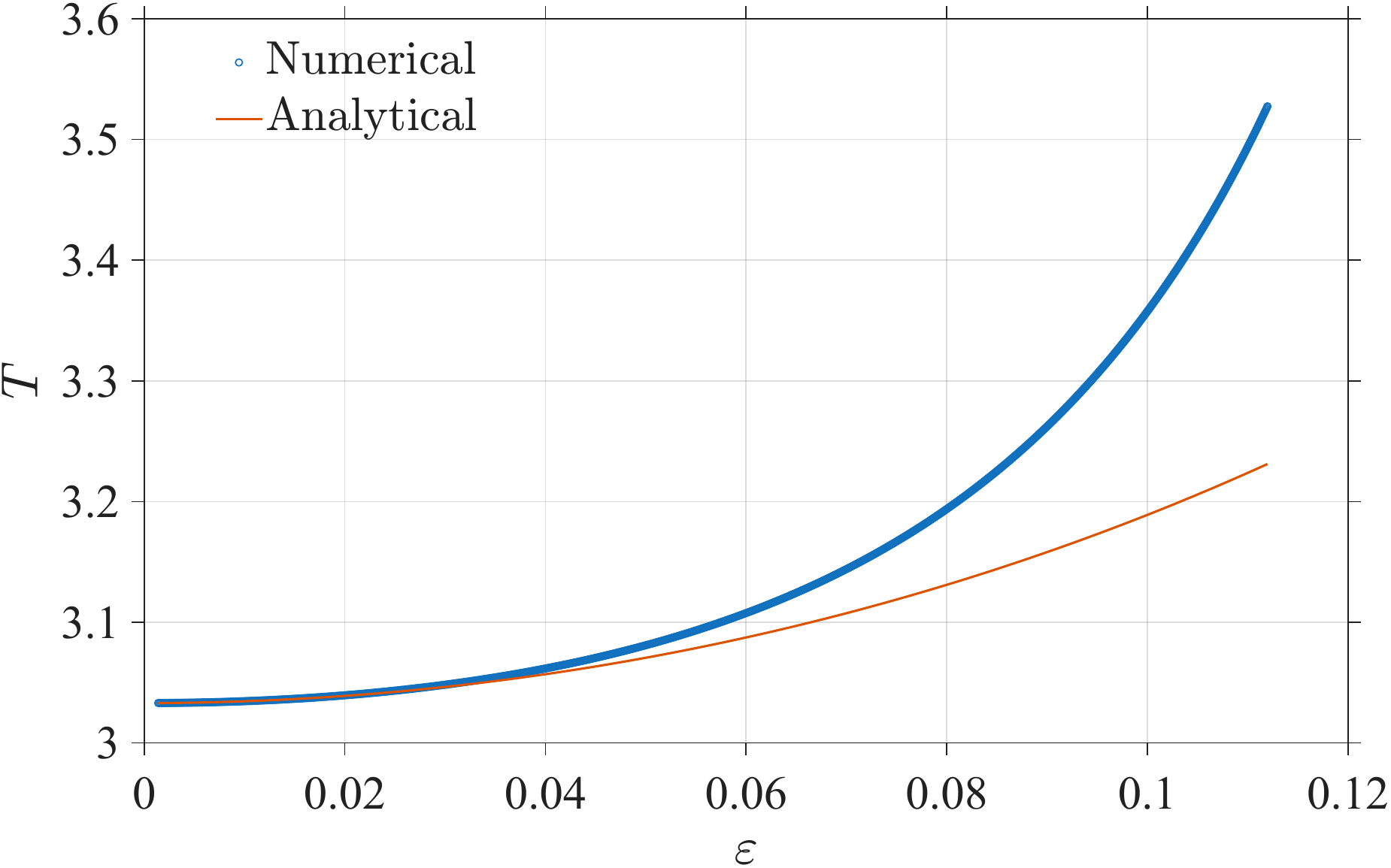}
\caption{Comparison of the analytical period with the numerically computed periodic-orbit family.}
\label{fig:period_compare_hill3bp}
\end{figure}
\section{Discussion}
\label{sec:discussion}
\subsection{Physical interpretation}
\label{sec:discussion-physical}

In both examples, the leading-order costate oscillates at the same natural frequency as the uncontrolled system, a direct consequence of the Jordan block established in Theorem~2. 
Therefore, it resonates with the natural motion, producing a secular response rather than a periodic solution. 
Since no nonlinear correction appears at $O(\varepsilon)$ or $O(\varepsilon^2)$, there is no mechanism capable of removing this resonant forcing, and the linear analysis therefore yields only the trivial branch ($\boldsymbol p = \boldsymbol0$).

The first opportunity to remove the resonance arises at $O(\varepsilon^3)$, although through different nonlinear mechanisms in the two examples. 
For the pendulum, the restoring force satisfies $\sin q_1 = q_1-\frac16q_1^3+O(q_1^5)$, so the first nonlinear correction appears only at cubic order. 
For Hill3BP, the gravitational force expands as $\nabla U(\boldsymbol r)=U_{rr}^*\delta\boldsymbol r+ \frac12 U_{rrr}^*(\delta\boldsymbol r,\delta\boldsymbol r) + \frac16 U_{rrrr}^*(\delta\boldsymbol r,\delta\boldsymbol r,\delta\boldsymbol r) +\cdots$,  with $\delta\boldsymbol r=O(\varepsilon)$. 
The quadratic term produces only a mean shift and a second harmonic, and therefore does not contribute to the resonant fundamental mode. 
The first resonant contribution arises at $O(\varepsilon^3)$ from the interaction between the quadratic correction $\boldsymbol X_2$ and the fundamental harmonic, together with the cubic gravitational term.
At this order, the nonlinear terms, the costate amplitude, and the frequency correction $\omega=\omega_0+\varepsilon^2\omega_2+\cdots$ enter simultaneously.
Their balance removes the resonant forcing, yielding the first solvability condition ($B=2\omega_2+\tfrac18$ for the pendulum and Eq.~\eqref{eq:SC1} for Hill3BP). 
The first solvability condition involving the cubic-order costate amplitude arises at this order. Together with the higher-order solvability condition, it selects a nontrivial branch whose costate scales cubically with the oscillation amplitude.
Consequently, the first optimal-control-induced periodic orbit appears only at cubic order.

\subsection{Relation to previous studies}

The Jordan-block degeneracy identified here is closely related to that
reported by Sakamoto~\cite{sakamoto2024}, who analyzed the stability
of the natural periodic-orbit family in the zero-costate subspace.
In contrast, the present work investigates the bifurcation of periodic
solutions with nonzero costates and shows that a nontrivial controlled
branch emerges only through higher-order nonlinear effects, with
cubic scaling in the oscillation amplitude.
\section{Conclusion}
\label{sec:conclusion}

This paper has investigated minimum-energy optimal control systems
obtained by augmenting Hamiltonian dynamics through Pontryagin's
minimum principle.
A general theory was established showing that the optimal-control
augmentation doubles the natural spectrum and, under a simple pairing
condition, produces a $2\times2$ Jordan block at each simple center
eigenvalue.
The theory was applied to a pendulum and Hill's restricted three-body
problem.
Although the two systems have different nonlinear structures, both
exhibit the same mechanism: the Jordan-block degeneracy precludes
periodic solutions with nonzero costates at linear order, whereas
higher-order nonlinear effects allow a nontrivial controlled periodic
branch to bifurcate from the natural periodic-orbit family.
In both examples, the costate, and hence the optimal control, exhibits
cubic scaling with the oscillation amplitude.
These results identify a general mechanism by which minimum-energy
optimal control modifies the local bifurcation structure of periodic
orbits near Hamiltonian equilibria.
\bibliographystyle{elsarticle-num}
\bibliography{references}

@book{pontryagin1962,
  author    = {Pontryagin, L. S. and Boltyanskii, V. G. and Gamkrelidze, R. V. and Mishchenko, E. F.},
  title     = {The Mathematical Theory of Optimal Processes},
  publisher = {Interscience Publishers},
  address   = {New York},
  year      = {1962}
}

@book{bryson1975,
  author    = {Bryson, Arthur E., Jr. and Ho, Yu-Chi},
  title     = {Applied Optimal Control: Optimization, Estimation, and Control},
  publisher = {Hemisphere Publishing Corporation},
  address   = {Washington, DC},
  year      = {1975}
}

@book{liberzon2012,
  author    = {Liberzon, Daniel},
  title     = {Calculus of Variations and Optimal Control Theory: A Concise Introduction},
  publisher = {Princeton University Press},
  address   = {Princeton},
  year      = {2012},
  doi       = {10.1515/9781400842643}
}

@book{agrachev2004,
  author    = {Agrachev, Andrei A. and Sachkov, Yuri L.},
  title     = {Control Theory from the Geometric Viewpoint},
  series    = {Encyclopaedia of Mathematical Sciences},
  volume    = {87},
  publisher = {Springer},
  address   = {Berlin, Heidelberg},
  year      = {2004},
  doi       = {10.1007/978-3-662-06404-7}
}

@article{weinstein1973,
  author  = {Weinstein, Alan},
  title   = {Normal Modes for Nonlinear Hamiltonian Systems},
  journal = {Inventiones Mathematicae},
  volume  = {20},
  number  = {1},
  pages   = {47--57},
  year    = {1973},
  doi     = {10.1007/BF01405263}
}

@article{moser1976,
  author  = {Moser, J{\"u}rgen},
  title   = {Periodic Orbits Near an Equilibrium and a Theorem by Alan Weinstein},
  journal = {Communications on Pure and Applied Mathematics},
  volume  = {29},
  number  = {6},
  pages   = {727--747},
  year    = {1976},
  doi     = {10.1002/cpa.3160290613}
}

@book{meyer2009,
  author    = {Meyer, Kenneth R. and Hall, Glen R. and Offin, Dan},
  title     = {Introduction to Hamiltonian Dynamical Systems and the N-Body Problem},
  edition   = {2},
  series    = {Applied Mathematical Sciences},
  volume    = {90},
  publisher = {Springer},
  address   = {New York},
  year      = {2009},
  doi       = {10.1007/978-0-387-09724-4}
}

@article{richardson1980,
  author  = {Richardson, David L.},
  title   = {Analytic Construction of Periodic Orbits about the Collinear Points},
  journal = {Celestial Mechanics},
  volume  = {22},
  number  = {3},
  pages   = {241--253},
  year    = {1980},
  doi     = {10.1007/BF01229511}
}

@book{lyapunov1992,
  author     = {Lyapunov, A. M.},
  title      = {The General Problem of the Stability of Motion},
  translator = {Fuller, A. T.},
  publisher  = {Taylor \& Francis},
  address    = {London},
  year       = {1992},
  isbn       = {9780748400621},
  note       = {Originally published in Russian in 1892}
}

@article{sakamoto2024,
  author        = {Sakamoto, Noboru},
  title         = {Maximally Degenerate {Floquet} Structure and Possible
                   Nonexistence of Optimal Control in a Pendulum
                   Swing-Up Problem},
  journal       = {arXiv preprint arXiv:2403.16156},
  year          = {2026}
}

@article{tsuruta2024new,
  author  = {Tsuruta, Ayano and Bando, Mai and Hokamoto, Shinji and Scheeres, Daniel J.},
  title   = {New Equilibria and Dynamic Structures Under Continuous Optimal Feedback Control},
  journal = {Journal of Guidance, Control, and Dynamics},
  volume  = {47},
  number  = {10},
  pages   = {2029--2040},
  year    = {2024},
  doi     = {10.2514/1.G008270}
}

@article{tsuruta2026periodic,
  author  = {Tsuruta, Ayano and Pan, Shanshan and Bando, Mai and
             Hokamoto, Shinji and Scheeres, Daniel J.},
  title   = {Periodic and Quasi-Periodic Orbits Induced by Optimal Control
             in the Hill Three-Body Problem},
  journal = {Celestial Mechanics and Dynamical Astronomy},
  volume  = {138},
  pages   = {38},
  year    = {2026},
  doi     = {10.1007/s10569-026-10313-2}
}

@article{pan2026fixed,
  author  = {Pan, Shanshan and Bando, Mai and Tsuruta, Ayano and
             Scheeres, Daniel J.},
  title   = {Fixed-Time Optimal Periodic Orbit Control in the Hill Three-Body Problem},
  journal = {Journal of Guidance, Control, and Dynamics},
  year    = {2026},
  doi     = {10.2514/1.G009916}
}

@book{gomez2001,
  author    = {G{\'o}mez, Gerard and Llibre, Jaume and
               Mart{\'i}nez, Regina and Sim{\'o}, Carles},
  title     = {Dynamics and Mission Design Near Libration Points,
               Volume I: Fundamentals: The Case of Collinear Libration Points},
  series    = {World Scientific Monograph Series in Mathematics},
  volume    = {2},
  publisher = {World Scientific},
  year      = {2001},
  doi       = {10.1142/4402}
}

@article{trelat2018steady,
  author  = {Tr{\'e}lat, Emmanuel and Zhang, Can and Zuazua, Enrique},
  title   = {Steady-State and Periodic Exponential Turnpike Property for
             Optimal Control Problems in Hilbert Spaces},
  journal = {SIAM Journal on Control and Optimization},
  volume  = {56},
  number  = {2},
  pages   = {1222--1252},
  year    = {2018},
  doi     = {10.1137/16M1097638}
}

@book{scheeres2012,
  author    = {Scheeres, Daniel J.},
  title     = {Orbital Motion in Strongly Perturbed Environments: Applications to Asteroid, Comet and Planetary Satellite Orbiters},
  publisher = {Springer},
  address   = {Berlin, Heidelberg},
  year      = {2012},
  doi       = {10.1007/978-3-642-03256-9}
}

@book{szebehely1967,
  author    = {Szebehely, Victor},
  title     = {Theory of Orbits: The Restricted Problem of Three Bodies},
  publisher = {Academic Press},
  year      = {1967},
  doi       = {10.1016/B978-0-12-395732-0.X5001-6}
}

@book{Conway2010,
  editor    = {Conway, Bruce A.},
  title     = {Spacecraft Trajectory Optimization},
  series    = {Cambridge Aerospace Series},
  volume    = {29},
  publisher = {Cambridge University Press},
  year      = {2010},
  doi       = {10.1017/CBO9780511778025},
  isbn      = {9780521518505}
}

@article{Kelly2017,
  author  = {Kelly, Matthew P.},
  title   = {An Introduction to Trajectory Optimization:
             How to Do Your Own Direct Collocation},
  journal = {SIAM Review},
  volume  = {59},
  number  = {4},
  pages   = {849--904},
  year    = {2017},
  doi     = {10.1137/16M1062569}
}

@article{Sakamoto2008,
  author  = {Sakamoto, Noboru and van der Schaft, Arjan J.},
  title   = {Analytical Approximation Methods for the Stabilizing Solution of the Hamilton--Jacobi Equation},
  journal = {IEEE Transactions on Automatic Control},
  volume  = {53},
  number  = {10},
  pages   = {2335--2350},
  year    = {2008},
  doi     = {10.1109/TAC.2008.2006113}
}

@book{vanderMeer1985,
  author    = {van der Meer, J. C.},
  title     = {The {H}amiltonian {H}opf Bifurcation},
  series    = {Lecture Notes in Mathematics},
  volume    = {1160},
  publisher = {Springer-Verlag},
  address   = {Berlin},
  year      = {1985},
  doi       = {10.1007/BFb0080357}
}

@article{vanGils1990,
  author  = {van Gils, S. A. and Krupa, M. and Langford, W. F.},
  title   = {Hopf Bifurcation with Non-Semisimple 1:1 Resonance},
  journal = {Nonlinearity},
  volume  = {3},
  number  = {3},
  pages   = {825--850},
  year    = {1990},
  doi     = {10.1088/0951-7715/3/3/013}
}

\end{document}